\newtheorem{mypro}{Proposition}
\newtheorem{mythm}{Theorem}
\newtheorem{mylem}{Lemma}
\newtheorem{corollary}{Corollary}
\theoremstyle{definition}
\newtheorem{mydef}{Definition}
\theoremstyle{remark}
\newtheorem{remark}{Remark}
\author[1]{Yukun Song}
\author[2]{Carey E. Priebe}
\author[1]{Minh Tang}
\affil[1]{Department of Statistics, North Carolina State University}
\affil[2]{Department of Applied Mathematics and Statistics, Johns
  Hopkins University}
\date{}
\title{Independence testing for inhomogeneous random graphs}
\begin{document}
\maketitle

\begin{abstract}
  Testing for independence between graphs is a problem that arises naturally in social network analysis and neuroscience. In this paper, we address independence testing for inhomogeneous Erd\H{o}s-R\'{e}nyi random graphs on the same vertex set. We first formulate a notion of pairwise correlations between the edges of these graphs and derive a necessary condition for their detectability. We next show that the problem can exhibit a statistical vs. computational tradeoff, i.e., there are regimes for which the correlations are statistically detectable but may require algorithms whose running time is exponential in $n$, the number of vertices. Finally, we consider a special case of correlation testing when the graphs are sampled from a latent space model (graphon) and propose an asymptotically valid and consistent test procedure that also runs in time polynomial in $n$. 
\end{abstract}

\section{Introduction}
Detecting correlation and testing for independence between
Euclidean vectors is one of the most classical and widely studied inference problems in multivariate statistics. See \citet[Chapter 11]{anderson} and \citet{kendall} for standard references when the data are low-dimensional and \citet{leung_drton,gretton_hsic,correlation_distances} for some examples of recent results in the high-dimensional setting. In contrast to the above mentioned literature, independence testing for graphs is much less studied. Part of the difficulty lies in choosing an appropriate notion of correlation for graph-valued data. In this paper, we are motivated by the problem of, given two graphs on the same vertex set, determining whether or not the presence of an edge between any pair of vertices in the first graph is {\em stochastically} independent of the presence of the corresponding edge in the second graph. Two graphs are thus said to be independent if all edges in the first are {\em pairwise} independent of the corresponding edges in the second, and are said to be correlated otherwise.

The above notion of pairwise edge correlation, while rather simple, does appear in many real data applications. For example, two networks on different social platforms are likely to share a common subset of users whose induced sub-networks are correlated in that two users are more likely to be linked in one social platform if they are already linked in the other social platform. As another example, the graphs constructed from the wiring diagrams of the mushroom body for the left and right hemispheres of the {\em Drosophila} larval are highly correlated as a pair of neurons are more likely to send signals to each other in the left hemisphere if their correspondences also send signals to each other in the right hemisphere \citep{eichler2017complete,winding_science}. Finally, two knowledge graphs constructed using documents on the same topics but written in different languages are also highly correlated \citep{priebe2009fusion,haghighi}.

In this paper, we consider independence testing in the setting where the graphs are, {\em marginally}, inhomogeneous Erd\H{o}s-R\'{e}nyi random graphs while the pairwise edges correlations are described by the entries of a symmetric matrix $R$ (see Definition~\ref{def_1}). Two graphs are then independent if $R_{ij} \equiv 0$ for all $\{i,j\}$ and are correlated if $|R_{ij}| > 0$ for {\em some}  $\{i,j\}$. Equivalently, we are interested in testing the null hypothesis $\mathbb{H}_0 \colon \|R\|_{F} = 0$ against the alternative hypothesis $\mathbb{H}_{A} \colon \|R\|_{F} > 0$.  

Our contributions are as follows. We first show that there exists a procedure for deciding between $\mathbb{H}_0$ and $\mathbb{H}_A$ with asymptotically {\em vanishing} type-I and type-II errors only if $\|R\|_{F} \rightarrow \infty$ as $n \rightarrow \infty$ (see Section \ref{sec:statistical_limit}). We also describe two related examples where the condition $\|R\|_{F} \rightarrow \infty$ is sufficient for one example but not sufficient for the other. We next show that the problem exhibits a statistical vs. computational tradeoff, i.e., there are regimes for which $\|R\| \rightarrow \infty$ that are statistically detectable but may require running time which scales exponentially with $n$. We achieve this through a polynomial time reduction of the planted clique problem, which is well-known to be computationally hard \citep{alon1998finding,barak2019nearly}, to our correlation testing problem (see Theorem~\ref{thm:gap}). Finally, we consider a special case of our correlation testing problem in which the graphs are sampled from the graphon or latent space model \citep{hoff2002latent,lovasz12:_large,bollobas2007phase} and propose an asymptotically valid and consistent test procedure that also runs in time polynomial in $n$ (see Section~\ref{graphon Model}).

We evaluate the performance of the proposed test procedures through simulations and show that they exhibits power even for moderate values of $n$ and small values of $n^{-1}\|R\|_{F}$. We then apply these procedures to two real data experiments. In the first experiment, we analyze the electrical and chemical connectomes for the {\em C. elegans} worm and show that there are significant correlations between the two connectomes; this confirms the observation made in \cite{chen2016joint} wherein the authors showed that, for their vertex nomination tasks, using both connectomes lead to better accuracy. In the second experiment, we analyze two Wikipedia
hyperlink graphs constructed using documents on the same topics but in different languages and show that, by estimating the correlations between the graphs, we obtain more accurate links prediction.

\subsection{Related Works}
Existing research on independence testing for a pair of graphs is quite limited. Among the current literature, only \citet{xiong2019graph} considered independence testing where the notion of correlation is similar to that described here, but the setting in \citet{xiong2019graph} is much more restrictive as they assumed that the pairwise correlations are the same for all edges, i.e., $R_{ij} \equiv c$ for some constant $c$ and furthermore each observed graph is, {\em marginally}, distributed according to a stochastic blockmodel \citep{holland1983stochastic}. Their results will be a special case of Theorem~\ref{thm:SBM} in this paper. 

Detecting pairwise edges correlation between two graphs is also a central focus in many graph matching algorithms. More specifically, given a pair of {\em unlabeled} adjacency matrices $A$ and $B$, graph matching aims to find a mapping between their vertex sets that best preserves their common structures, for example by minimizing the Frobenius norm error $\|\Pi A \Pi^{\top} - B\|_{F}$ over all {\em permutation} matrices $\Pi$. Recent results show that many graph matching algorithms have {\em average} running times that are polynomial in $n$, the number of vertices, whenever the graphs are sufficiently correlated; see e.g., \citep{wu2020testing,lyzinski2019matchability,lyzinski2016graph,pedarsani2011privacy,onaran2016optimal,korula2014efficient} and the references therein. In contrast, the worst case running time for these algorithms could be exponential in $n$ if the graphs are independent. Because the correspondence between the vertex sets is assumed {\em unknown} in graph matching but is assumed known in the context of our paper, our technical challenges and results are quite different from those for graph matching. Indeed, a common assumption in graph matching (see e.g., \cite{wu2020testing,lyzinski2016graph,pedarsani2011privacy}) is that $R_{ij} \equiv c$ for some constant $c$, and furthermore that the observed graphs are, {\em marginally}, Erd\H{o}s-R\'{e}nyi with the same edge probability $p$.

\section{Background and Setting}
\label{sec:background}
We now formally introduce the hypothesis testing problem considered in this paper. We begin by describing the notion of edge correlated inhomogeneous Erd\H{o}s-R\'{e}nyi graphs. Note that all graphs considered in this paper are simple, undirected graphs, i.e., there are no self-loops and no multiple edges.
\begin{mydef}\label{def_1}
  Let $n \in \mathbb{N}$. Let $P\in [0,1]^{n\times n}$ and
  $R\in[{-1},1]^{n\times n}$ be {\em symmetric} matrices where $R$
  satisfies the constraint
  \begin{equation}
      \label{eq:r_ij_bound1}
        R_{ij}\geq-\min\Big\{\frac{1-P_{ij}}{P_{ij}},\frac{P_{ij}}{1-P_{ij}}\Bigr\}, \quad \text{for all $i,j$}. 
  \end{equation}
  We say that $(A,B)$ are $R$-correlated heterogeneous Erd\H{o}s-R\'{e}nyi graphs on $n$ vertices with probability matrix $P$ and correlation matrix $R$, denoted by $(A,B)\sim R$-$\mathrm{ER}(P)$, if 
    \begin{enumerate}
        \item
            $A$ is the adjacency matrix for an inhomogeneous Erd\H{o}s-R\'{e}nyi graph on $n$
            vertices with probability matrix $P$, that is, $A$ is a $n
            \times n$ symmetric binary matrix whose (upper triangular) entries are
            independent Bernoulli random variables with success
            probabilities $\{P_{ij}\}_{i < j}$. 
        \item
            $B$ is the adjacency matrix for another inhomogeneous Erd\H{o}s-R\'{e}nyi graph on $n$ vertices with probability matrix $P$.
        \item The pairs $\{(A_{ij},B_{ij})\}_{i < j}$ are {\em mutually
            independent} bivariate random variables. Here $A_{ij}$ and
          $B_{ij}$ are the $ij$th entry of $A$ and
          $B$, respectively. 
        \item For any $i < j$, $A_{ij}$ and $B_{ij}$ are correlated with
          Pearson correlation $R_{ij}$.
    \end{enumerate}
\end{mydef}
\begin{remark}\label{remark_ext}
Definition \ref{def_1} assumes that $A$ and $B$ have the same marginal distribution. In Section \ref{different} we will consider a slight extension of this model where we allow for $A$ and $B$ to have different marginal distributions. More specifically, we say that $(A, B)$ are generated from the $R$-ER$(P,Q)$ model if conditions 1 and 2 in Definition \ref{def_1} are replaced by the assumption that $A$ is marginally $P$ and $B$ is marginally $Q$, respectively. The remaining conditions 3 and 4 remain unchanged. Let $\phi(x) = \sqrt{x/(1-x)}$ for $x \in (0,1)$. We then note that for the correlation $R_{ij}$ to be valid in the $R$-ER$(P,Q)$ model, we assume
  \begin{equation}
  \begin{split}
      \label{eq:r_ij_p_q_bound}
  -\min\Bigl\{\phi(P_{ij})\phi(Q_{ij}),\frac{1}{\phi(P_{ij})\phi(Q_{ij})}\Bigr\}
  \leq R_{ij}
  \leq\min\Bigl\{\frac{\phi(P_{ij})}{\phi(Q_{ij})},\frac{\phi(Q_{ij})}{\phi(P_{ij})}\Bigr\}
  \end{split}
  \end{equation}
as a replacement for the condition $R_{ij}\geq-\min\{\tfrac{1-P_{ij}}{P_{ij}},\tfrac{P_{ij}}{1-P_{ij}}\}$ in Definition \ref{def_1}.
\end{remark}

Correlated inhomogeneous ER graphs are widely studied in the graph
matching literature, see e.g., \citet{fishkind2019alignment,sussman2018matched,lyzinski2019matchability} and the references therein. However, as we allude to in the introduction, for graph matching the correlation matrix $R$ is usually assumed to be a constant matrix, i.e., $R_{ij} \equiv c$ for all $\{i,j\}$. 

Let $\mathbb{A}_n$ denote the set of $n \times n$ hollow symmetric binary matrices. Given a pair of $R$-ER$(P)$ graphs and
symmetric matrices $[a_{ij}]$, $[b_{ij}] \in \mathbb{A}_n$, the joint likelihood for the adjacency matrices $A=[A_{ij}]$ and $B=[B_{ij}]$ is 
    \begin{gather*}
        \mathbb{P}(A=[a_{ij}],B=[b_{ij}])
        =
        \prod_{i < j}
        \mathbb{P}(A_{ij}=a_{ij},B_{ij}=b_{ij})
    \end{gather*}
    where $\mathbb{P}(A_{ij} = a, B_{ij} = b)$ for $1 \leq i < j \leq
    n$ is given by
    \begin{equation}\label{distribution P_n}
        \mathbb{P}(A_{ij}=a,B_{ij}=b)=
        \begin{cases}
            P_{ij}^2 + P_{ij}(1-P_{ij})R_{ij},          &a = b= 1\\
            (1-P_{ij})^2 + P_{ij}(1-P_{ij})R_{ij},      &a = b=0, \\
            P_{ij}(1-P_{ij})(1-R_{ij}),  & a \not = b.
        \end{cases}
    \end{equation}

Let $(A,B)$ be a pair of $R$-ER$(P)$ graphs with unknown correlation matrix $R$ and edge probabilities matrix $P$. We want to test the  hypotheses $\mathbb{H}_0 \colon R = 0$ against $\mathbb{H}_A \colon R \not = 0$. This is equivalent to testing
 \begin{equation}
   \label{eq:test_formulation}
        \mathbb{H}_0:\|R\|_{\ast} =0 \quad \text{against} \quad \mathbb{H}_A:\|R\|_{\ast}>0.
 \end{equation}
for any choice of matrix norm $\|\cdot\|_{\ast}$. We will use the Frobenius norm as it is one of the simplest and most widely used norms while also yielding useful and interesting theoretical results in the setting of the current paper; see in particular Theorem~\ref{thm:main1} below. We expect that other norms, such as the spectral or infinity norms, will lead to results that are related but distinct from those presented here and we leave this for future work.
 
Let $T$ be any test procedure for the hypothesis in Eq.~\eqref{eq:test_formulation}. A desirable property for $T$ is consistency, that is, as the number of vertices $n$ approaches infinity, we want $T$ to have {\em both} vanishing Type-I error and vanishing Type-II error. More specifically, along a sequence of edge probabilities matrices and correlation matrices $(P_n,R_n)$, a test procedure is consistent if its error rate converges to $0$ as $n\to\infty$, that is
    \begin{equation*}
        \lim_{n\to\infty} \Bigl(1 - \mathbb{P}(\text{reject} \,\,
        \mathbb{H}_0 \mid \|R_n\|_{F} > 0) +
        \mathbb{P}(\text{reject}\,\, \mathbb{H}_0 
        \mid \|R_n\|_{F} = 0)\Bigr)=0.
    \end{equation*}

\section{Statistically Limit}
\label{sec:statistical_limit}
\subsection{Detectability Threshold}
\label{sec:detectability}
We first derive a necessary condition for the hypothesis testing problem in Eq.\eqref{eq:test_formulation} to be statistically detectable. Our approach is based on the second moment method for the ratio of the likelihood when $\|R\|_{F} = 0$ against the likelihood when $\|R\|_{F} > 0$. See \citet{wu2018statistical} for an elegant survey of the second moment method and its application in deriving detectability thresholds for statistical problems with planted structures. 
    
The second moment method is motivated by the notion of contiguity between distributions. Contiguity is an asymptotic generalization of absolutely continuity characterized as follows.
    \begin{mydef} 
    Let $\{\mathcal{P}_n\}_{n \geq 1}$ and $\{\mathcal{Q}_n\}_{n \geq 1}$ be two sequence of probability measures. We say that $\{\mathcal{P}_n\}$ is contiguous with respect to $\{\mathcal{Q}_n\}$ if $\mathcal{Q}_n(S_n)\rightarrow 0$ implies $\mathcal{P}_n(S_n)\rightarrow 0$ for every sequence of measurable sets $S_n$.
    \end{mydef}
If the probability distribution under the alternative hypothesis is contiguous with respect to the distribution under the null hypothesis, then there does not exist a valid and consistent test procedure. Indeed, due to contiguity, any rejection region with vanishing type I error under the null hypothesis will necessarily have vanishing power under the alternative hypothesis. For more on the definition of contiguity and its properties, see \citet{van2000asymptotic}.

Let $\{\mathcal{P}_n\}_{n \geq 1}$ and $\{\mathcal{Q}_n\}_{n \geq 1}$ be the sequences of {\em joint distributions} for pairs of adjacency matrices $(A_n,B_n)_{n \geq 1}$ from the $R$-ER($P$) random graphs model with $\|R\|_{F} > 0$ (for $\mathcal{P}_n$) and $\|R\|_{F} = 0$ (for $\mathcal{Q}_n$), respectively. We emphasize that, for simplicity of notations, we have dropped the index $n$ from the matrices $R$ and $P$.

It is well known that $\{\mathcal{P}_n\}$ is contiguous with respect to $\{\mathcal{Q}_n\}$ (see e.g., Eq.(13.3) of \cite{wu2018statistical}) if the second moment of the likelihood ratio between $\mathcal{P}_n$ and $\mathcal{Q}_n$ is bounded i.e., that $$ \limsup_{n \rightarrow \infty} \mathbb{E}_{(A_n,B_n) \sim\mathcal{Q}_n}\Bigl[\Bigl(\frac{\mathcal{P}_n(A_n,B_n)}{\mathcal{Q}_n(A_n,B_n)}\Bigr)^2\Bigr] < \infty.$$

We then have the following result (see the appendix for a proof). 
\begin{mythm}
  \label{thm:main1}
  Let $\{\mathcal{P}_n\}_{n \geq 1}$ be the sequence of {\em joint distributions} for pairs of adjacency matrices $(A_n,B_n)_{n \geq 1 }$ from the $R$-$\mathrm{ER}(P)$ random graphs model with $\|R\|_{F} > 0$. Similarly, let $\{\mathcal{Q}_n\}$ be the joint distribution of the $(A_n,B_{n})_{n \geq 1}$ when $\|R\|_{F} = 0$. Then
    \begin{equation*}
        \begin{split}
            \mathbb{E}_{(A_n,B_n) \sim\mathcal{Q}_n}\Bigl[\Bigl(\frac{\mathcal{P}_n(A_n,B_n)}{\mathcal{Q}_n(A_n,B_n)}\Bigr)^2\Bigr]&=\prod_{i<j}(1+R_{ij}^2).
        \end{split}
  \end{equation*}
Therefore if $\limsup_{n \rightarrow \infty} \|R\|_{F} < \infty$ then $\mathcal{P}_n$ is contiguous with respect to $\mathcal{Q}_n$.
\end{mythm}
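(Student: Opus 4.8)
The plan is to exploit the fact that both $\mathcal{P}_n$ and $\mathcal{Q}_n$ factorize over edges. Under $\mathcal{Q}_n$ (the null, $R=0$) the pairs $\{(A_{ij},B_{ij})\}_{i<j}$ are mutually independent, and since the joint likelihood in Eq.~\eqref{distribution P_n} is also a product over edges, the likelihood ratio splits as $\frac{\mathcal{P}_n(A,B)}{\mathcal{Q}_n(A,B)}=\prod_{i<j}L_{ij}$, where $L_{ij}$ denotes the per-edge ratio. First I would write $\mathbb{E}_{\mathcal{Q}_n}\bigl[\bigl(\tfrac{\mathcal{P}_n}{\mathcal{Q}_n}\bigr)^2\bigr]=\prod_{i<j}\mathbb{E}_{\mathcal{Q}_n}[L_{ij}^2]$, which is valid precisely because the edges are independent under $\mathcal{Q}_n$. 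This reduces the whole computation to a single-edge moment.

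The computational core is thus $\mathbb{E}_{\mathcal{Q}_n}[L_{ij}^2]$ for a fixed edge. Writing $p=P_{ij}$ and $r=R_{ij}$, I would read the four values of $L_{ij}$ off Eq.~\eqref{distribution P_n} against the product-Bernoulli null: on the outcome $(1,1)$ it equals $1+\tfrac{(1-p)r}{p}$, on $(0,0)$ it equals $1+\tfrac{pr}{1-p}$, and on each discordant outcome it equals $1-r$. Then $\mathbb{E}_{\mathcal{Q}_n}[L_{ij}^2]$ is the $\mathcal{Q}_n$-weighted sum of the squares of these, with weights $p^2,(1-p)^2,p(1-p),p(1-p)$ respectively. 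Expanding, the $r^0$ contributions collapse to $\bigl(p+(1-p)\bigr)^2=1$, the $r^1$ contributions cancel exactly, and the $r^2$ contributions combine to $r^2\bigl[(1-p)^2+p^2+2p(1-p)\bigr]=r^2$. Hence $\mathbb{E}_{\mathcal{Q}_n}[L_{ij}^2]=1+R_{ij}^2$, and taking the product over $i<j$ yields the claimed identity $\prod_{i<j}(1+R_{ij}^2)$.

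For the contiguity conclusion I would use $1+x\le e^x$ to bound $\prod_{i<j}(1+R_{ij}^2)\le\exp\bigl(\sum_{i<j}R_{ij}^2\bigr)\le\exp\bigl(\tfrac12\|R\|_F^2\bigr)$, the last inequality because $\sum_{i<j}R_{ij}^2\le\tfrac12\|R\|_F^2$ for a symmetric $R$. If $\limsup_{n\to\infty}\|R\|_F<\infty$, this upper bound is bounded in $n$, so the second moment of the likelihood ratio stays bounded, and the criterion cited before the statement (Eq.~(13.3) of \citet{wu2018statistical}) gives contiguity of $\mathcal{P}_n$ with respect to $\mathcal{Q}_n$.

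There is no deep obstacle here: the result is an exact moment identity followed by a one-line bound. The only place demanding care is the per-edge algebra, where one must verify that the linear-in-$r$ terms cancel and the quadratic terms sum to exactly $r^2$ rather than some $p$-dependent expression; this clean cancellation is what makes the final answer depend on $R$ alone and not on $P$. I would also fix the symmetry/hollowness convention at the outset so that the passage between $\sum_{i<j}R_{ij}^2$ and $\|R\|_F^2$ is applied consistently.
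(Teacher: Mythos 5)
Your proposal is correct and follows essentially the same route as the paper: factorize the squared likelihood ratio over edges using independence under $\mathcal{Q}_n$, verify by direct expansion that the per-edge second moment equals $1+R_{ij}^2$ (your algebra checks out, including the cancellation of the linear-in-$r$ terms), and bound the resulting product. The only cosmetic difference is in the last step, where you use $1+x\le e^{x}$ in place of the paper's Jensen's-inequality argument for the forward direction of its technical lemma; the paper additionally proves the converse implication, which is not needed for the stated conclusion.
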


Theorem~\ref{thm:main1} showed that the condition $\limsup \|R\|_{F} = \infty$ as $n \rightarrow \infty$ is necessary for the existence of a
consistent test procedure for testing Eq.~\eqref{eq:test_formulation}. We now present an example of a hypothesis testing problem for which $\limsup
\|R\|_{F} = \infty$ is also sufficient. Consider a $R$-ER$(P)$ model with $P_{ij} \equiv p$, i.e., the marginal distribution for the observed graphs is
Erd\H{o}s-R\'{e}nyi. Suppose also that $p \geq c_0$ for some constant $c_0 > 0$ not depending on $n$. Next, suppose that $n$ is even and let $\sigma = (\sigma_1, \dots, \sigma_n)$ be such that $\sigma_{i} \in \{-1,1\}$ for all $i$ and $\sum_{i=1}^{n} \sigma_i = 0$. Given $\sigma$, the correlation matrix $R$ has entries
    \begin{equation}
      \label{eq:rn_example1}
      R_{ij} = 
        \begin{cases}
            r,  & \text{if $\sigma_i=\sigma_j$} \\
            0,  & \text{otherwise}.
        \end{cases}
    \end{equation} 
    We are interested in testing the hypothesis
    \begin{align*}
        \mathbb{H}_0 \colon r=0 \quad \text{against} \quad
        \mathbb{H}_A \colon r\neq0 
    \end{align*}
      and this is a special case of Eq.~\eqref{eq:test_formulation} where $$ \frac{(n^2-2n)}{4}r^2 \leq \|R\|_F^2 = \sum_{i\neq j} r^2 1_{\{\sigma_i=\sigma_j\}} \leq n^2r^2.$$
    Now consider the matrix $S = A \circ B$ with elements $S_{ij} = A_{ij}B_{ij}$. The $S_{ij}$ are then {\em independent} Bernoulli random variables with
    \begin{equation*}
        S_{ij} \sim 
        \begin{cases}
            \text{Bernoulli}(p^2 + rp(1-p)),    & \sigma_i=\sigma_j \\
            \text{Bernoulli}(p^2),    & \sigma_i\neq\sigma_j.
        \end{cases}
    \end{equation*}
In other words $S$ is the adjacency matrix of an Erd\H{o}s-R\'{e}nyi graph with edge probabilities $p^2$ under $\mathbb{H}_0$ and is the adjacency matrix of a $2$-blocks stochastic blockmodel graph under $\mathbb{H}_A$. Let $\lambda_1(S)$ be the largest eigenvalue of $S$. Then by \cite{furedi1981eigenvalues} and \cite{tang2018eigenvalues},  we have
    \begin{gather}
        \lambda_1(S) - np^2 \longrightarrow N(1 - p^2, 2p^2(1 - p^2))
        \quad \text{under $\mathbb{H}_0$}, \\
        \lambda_1(S) - np^2 - \frac{1}{2}nrp(1-p) \longrightarrow
        N(\eta, \gamma) \quad \text{under $\mathbb{H}_A$}.
    \end{gather}
Here $\eta$ and $\gamma$ are bounded constants. Therefore $\lambda_1(S)$ yields a consistent test procedure for testing $\mathbb{H}_0 \colon r = 0$ against $\mathbb{H}_A \colon r \not = 0$ whenever $\|R\|_{F} \asymp |nr| \rightarrow \infty$. More specifically, as both $A$ and $B$
are {\em marginally} Erd\H{o}s-R\'{e}nyi graphs, first estimate $p$ via $$\hat{p} = \frac{1}{n(n-1)}\sum_{i < j} (a_{ij} + b_{ij}).$$
Next, define $T(A, B) = |\lambda_1(S) - n \hat{p}^2|$. Then $T(A, B)$ is bounded in probability under $\mathbb{H}_0$ and $T(A, B)$ diverges under $\mathbb{H}_A$ if $\|R\|_{F} \rightarrow \infty$. We summarized the above discussion in the following result.
    
\begin{mypro}
\label{pro:1}
  Let $(A, B)$ be a pair of $R$-correlated Erd\H{o}s-R\'{e}nyi graphs with marginal edges probability $p$ where $p > 0$ does not depend on $n$ and suppose that $R$ is of the form in Eq.~\eqref{eq:rn_example1}. Then there exists a consistent test procedure for testing $\mathbb{H}_0 \colon r = 0$ against $\mathbb{H}_A \colon r \not = 0$ if and only if $\|R\|_{F} \rightarrow \infty$ as $n \rightarrow \infty$. 
\end{mypro}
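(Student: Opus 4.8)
The plan is to prove the two implications separately: necessity from the contiguity bound of Theorem~\ref{thm:main1}, and sufficiency from the spectral statistic already introduced. For necessity I would argue by contraposition. If $\|R\|_F \not\to\infty$, then some subsequence $\{n_k\}$ satisfies $\limsup_k \|R_{n_k}\|_F < \infty$. Since $\log\prod_{i<j}(1+R_{ij}^2) \le \sum_{i<j}R_{ij}^2 = \tfrac12\|R\|_F^2$, Theorem~\ref{thm:main1} gives that the second moment of the likelihood ratio stays bounded along $\{n_k\}$, so $\mathcal P_{n_k}$ is contiguous with respect to $\mathcal Q_{n_k}$. By contiguity, any test whose type-I error vanishes along $\{n_k\}$ has power tending to $0$ along $\{n_k\}$, so its type-II error tends to $1$; hence no consistent test exists. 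This is the contrapositive of the ``only if'' direction.

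For sufficiency, assume $\|R\|_F \to \infty$; since $\tfrac14(n^2-2n)r^2 \le \|R\|_F^2 \le n^2r^2$, this is equivalent to $n|r|\to\infty$. I would analyze $T(A,B)=|\lambda_1(S)-n\hat p^2|$ with $S=A\circ B$ and $\hat p=\frac{1}{n(n-1)}\sum_{i<j}(a_{ij}+b_{ij})$. A variance computation gives $\hat p - p = O_p(n^{-1})$, whence $n\hat p^2 - np^2 = n(\hat p-p)(\hat p+p) = O_p(1)$. Feeding in the cited eigenvalue limits, namely $\lambda_1(S)-np^2 = O_p(1)$ under $\mathbb{H}_0$ \citep{furedi1981eigenvalues} and $\lambda_1(S)-np^2 = \tfrac12 nrp(1-p)+O_p(1)$ under $\mathbb{H}_A$ \citep{tang2018eigenvalues}, yields $T=O_p(1)$ under $\mathbb{H}_0$ and $T=\tfrac12 n|r|p(1-p)+O_p(1)$ under $\mathbb{H}_A$. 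Choosing a deterministic threshold $t_n$ with $t_n\to\infty$ and $t_n=o(n|r|)$, which is possible because $n|r|\to\infty$, and rejecting when $T>t_n$, then makes the type-I error vanish, as $T=O_p(1)$ under the null, and makes the type-II error vanish, as $T\gg t_n$ under the alternative using $p(1-p)>0$ since $p\in(0,1)$ is fixed. This exhibits a consistent test.

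The hard distributional input is supplied by the two cited spectral limit theorems, so the remaining obstacles are bookkeeping but genuine: first, confirming that replacing the unknown $p$ with $\hat p$ shifts the centering by only $O_p(1)$, so the $\Theta(n|r|)$ separation under $\mathbb{H}_A$ is preserved; and second, selecting $t_n$ so that both errors vanish even when $n|r|$ diverges arbitrarily slowly, which is why the threshold must be allowed to grow slower than the signal rate rather than being a fixed universal sequence.
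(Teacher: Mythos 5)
Your proposal is correct and follows essentially the same route as the paper: necessity via the second-moment/contiguity bound of Theorem~\ref{thm:main1}, and sufficiency via the statistic $T(A,B)=|\lambda_1(A\circ B)-n\hat p^2|$ together with the cited eigenvalue limit theorems. The extra bookkeeping you supply (the $O_p(1)$ effect of substituting $\hat p$ for $p$ and the choice of a slowly growing threshold $t_n=o(n|r|)$) is left implicit in the paper but is exactly what is needed to make its sketch rigorous.
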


\begin{remark}
  \label{rem:not_sufficient}
We end this subsection with an example of a correlation matrix $R$ for which the condition $\|R\|_{F} \rightarrow \infty$ is {\em not} sufficient to guarantee the existence of a consistent test procedure. Fix a constant $\epsilon \geq 0$. Let $(A, B)$ be $R$-correlated Erd\H{o}s-R\'{e}nyi graphs with marginal edges probability $p = 0.5$ where $R$ is a symmetric matrix whose (upper triangular) entries are  iid random variables with $\mathbb{P}[R_{ij} = \epsilon] = \mathbb{P}[R_{ij} = - \epsilon] = 0.5$. 

Given $(A, B)$, detecting correlation between $A$ and $B$ is equivalent to testing $\mathbb{H}_0 \colon \epsilon = 0$ against $\mathbb{H}_A \colon \epsilon > 0$. However, as $R$ is unknown and we only observed $(A, B)$, the probabilities of a given pair $(A,B)$ under either $\mathbb{H}_0$ or $\mathbb{H}_A$ are the same. More specifically, under $\mathbb{H}_0$ we have
    \begin{equation}
        \mathbb{P}_{\mathbb{H}_0}(A_{ij}=a,B_{ij}=b) = \frac{1}{4},
        \quad \text{for all $(a,b) \in \{(0,0),(0,1),(1,0),(1,1)\}$}. 
    \end{equation}
 Meanwhile under $\mathbb{H}_A$, if we first conditioned on $R_{ij}$ then
    \begin{equation}
        \mathbb{P}(A_{ij}=a,B_{ij}=b|R_{ij}) = 
            \begin{cases}
                \frac{1 + R_{ij} }{4}, &\text{if $(a,b) \in \{(0,0), (1,1)\}$}\\
                \frac{1 - R_{ij}}{4}, &\text{if $(a,b) \in \{(1,0),(0,1)\}$}
           \end{cases}
    \end{equation}
    and hence, as $R_{ij}$ is unobserved, 
    \begin{equation*}
      \begin{split}
        \mathbb{P}_{\mathbb{H}_A}(A_{ij} = a, B_{ij} = b) &=
        \frac{1}{2} \mathbb{P}(A_{ij} = a, B_{ij} = b \mid R_{ij} = \epsilon) +
        \frac{1}{2} \mathbb{P}(A_{ij} = a, B_{ij} = b \mid R_{ij} = -\epsilon)
        \\ &= \frac{1}{4}.
        \end{split}
    \end{equation*}

There is thus no consistent test procedure for $\mathbb{H}_0 \colon \epsilon = 0$ against $\mathbb{H}_A \colon \epsilon \not = 0$ under this assumed correlation structure for $R$ even though $\|R\|_{F} = n \epsilon$ for {\em any} realization of $R$ (so that $\|R\|_{F} \rightarrow \infty$ as $n\rightarrow \infty$ for any fixed $\epsilon > 0$). The condition $\|R\|_{F} \rightarrow \infty$ is therefore not sufficient.
\end{remark}

\subsection{Computational Feasibility}
\label{sec:computational}
The results in Section~\ref{sec:detectability} provides a necessary condition for statistical detectability, i.e., the existence of a consistent test procedure for testing $\mathbb{H}_0 \colon \|R\|_{F}$ against $\mathbb{H}_{A} \colon \|R\|_{F} > 0$. However, there are numerous problems that are statistically feasible with parameter regimes for which there are no known {\em computationally efficient} procedures for solving them. Examples include community detection \citep{banks2016information}, sparse PCA \citep{sparse_pca} and estimation in spiked tensor models \citep{statistical_limit_tensor}. We now present an example of this phenomenon in the context of independence testing. In particular, we show the presence of a statistical vs. computational gap by transforming the independence testing problem to the following well-known planted clique problem in theoretical computer science.

Let $A$ be an Erd\H{o}s-R\'{e}nyi graph on $n$ vertices with common edges probability $p$. Next, given an integer $s_0 \geq 0$, select a subset of $s_0$ vertices of $A$ and form a clique between these $s_0$ vertices. Suppose we are now given a graph $A$ generated according to the above planted clique model with {\em unknown} $s_0$ and a positive integer $k \geq 1$. The $\mathrm{Planted Clique}$ problem seeks to determine if $A$ contains a clique of size at least $k$. It is conjectured that there is no polynomial time algorithm for the $\mathrm{Planted Clique}$ problem that works for {\em all} values of $k \in \{1,2,\dots,n\}$, unless the $\mathrm{P} = \mathrm{NP}$ hypothesis in computational complexity holds \citep{braverman2017eth}. In particular, if $\log n \ll k \ll n^{1/2}$, then all known algorithms require quasi-polynomial time of  $n^{O(\log n)}$ \citep{barak2019nearly,alon1998finding}.

Let $(A, B)$ be $R$-correlated Erd\H{o}s-R\'{e}nyi graphs with marginal edges probability $p = \tfrac{1}{2}$ and correlation matrix $R$ constructed as follows. First, select a subset $\mathcal{C}$ of vertices with $|\mathcal{C}| = s_0$. Then define
    \begin{equation}
      \label{eq:planted}
    	R_{ij} =
   		\begin{cases}
   			-1,	&	\text{if $i \in \mathcal{C}$ and $j \in \mathcal{C}$},\\
   			0,	&	\text{otherwise}.
   		\end{cases}
    \end{equation}
Note that $\|R\|_{F} = s_0$ and furthermore, $(A_{ij}, B_{ij}) \in \{(0,1), (1,0)\}$ whenever $i,j \in \mathcal{C}$ and $i \neq j$. Now consider testing $\mathbb{H}_0 \colon \|R\|_{F} = 0$ against $\mathbb{H}_A \colon \|R\|_{F} > 0$. Assume $s_0$ is unknown with $\log n \ll s_0 \ll n^{1/2}$ and suppose that there exists a polynomial time consistent test procedure for testing $\mathbb{H}_0$ and $\mathbb{H}_A$. Let $S = |A - B|$ where the absolute value is taken elementwise, i.e., $S_{ij} = |A_{ij} - B_{ij}|$. Note that $S$ is the adjacency matrix of a random graph generated from the planted clique model with clique size $s_0$ and edges probability $p = \tfrac{1}{2}$. Then a consistent test procedure for testing $\mathbb{H}_0$ against $\mathbb{H}_A$ will also yield a
polynomial time algorithm for deciding whether or not $S$ has a clique of size at least $s_0$, thereby contradicting the claim of the Planted Clique conjecture; see the appendix for a more formal argument. In summary we have the following example of a statistical versus computational gap for independence testing.

\begin{mythm}\label{thm:gap}
  Let $(A, B)$ be a pair of $R$-correlated Erd\H{o}s-R\'{e}nyi graphs on $n$ vertices where $R$ is of the form in Eq.~\eqref{eq:planted} for some $\mathcal{C}$ with $\log n \ll |\mathcal{C}| \ll n^{1/2}$. Then, assuming the Planted Clique conjecture holds, there is no polynomial time consistent test procedure for testing $\mathbb{H}_0 \colon \|R\|_{F} = 0$ against $\mathbb{H}_A \colon \|R\|_{F} > 0$ even though $\|R\|_{F} \rightarrow \infty$ as $n \rightarrow \infty$.  
\end{mythm}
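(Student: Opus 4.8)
The plan is to prove the theorem by a polynomial-time reduction \emph{from} the Planted Clique detection problem, inverting the map $S = |A-B|$. I argue by contradiction: assuming a polynomial-time consistent test $T$ for $\mathbb{H}_0 \colon \|R\|_{F} = 0$ against $\mathbb{H}_A \colon \|R\|_{F} > 0$ exists, I construct a polynomial-time procedure that decides Planted Clique with vanishing error throughout the regime $\log n \ll s_0 \ll n^{1/2}$, contradicting the conjecture.

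First I would record the forward distributional identity already indicated in the text. For $(A,B) \sim R$-$\mathrm{ER}(P)$ with $p = \tfrac12$ and $R$ as in Eq.~\eqref{eq:planted}, substituting $R_{ij} = -1$ into Eq.~\eqref{distribution P_n} gives $\mathbb{P}(A_{ij} = B_{ij}) = 0$, so $(A_{ij},B_{ij}) \in \{(0,1),(1,0)\}$ and hence $S_{ij} = |A_{ij} - B_{ij}| = 1$ deterministically on $\mathcal{C}$; off $\mathcal{C}$ the pair is a product of independent fair coins, giving $S_{ij} \sim \mathrm{Bernoulli}(\tfrac12)$. Thus $S$ is distributed exactly as $G(n,\tfrac12)$ with a planted clique of size $s_0$ on $\mathcal{C}$, and as $G(n,\tfrac12)$ when $R \equiv 0$.

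The heart of the reduction is the reverse lifting. Given an arbitrary Planted Clique instance $S$, I would build $(A,B)$ in $O(n^2)$ time by drawing, independently over pairs $i<j$, the value $(A_{ij},B_{ij})$ uniformly on $\{(0,1),(1,0)\}$ when $S_{ij} = 1$ and uniformly on $\{(0,0),(1,1)\}$ when $S_{ij} = 0$ (and symmetrizing). I then verify that this lift reproduces the $R$-$\mathrm{ER}(\tfrac12)$ law exactly: when $S$ is pure $G(n,\tfrac12)$ each of the four joint outcomes of $(A_{ij},B_{ij})$ receives probability $\tfrac14$, i.e.\ $R \equiv 0$; when $S$ carries a planted $s_0$-clique on $\mathcal{C}$, the conditional law on $\mathcal{C}$ is uniform on $\{(0,1),(1,0)\}$, i.e.\ $R_{ij} = -1$, while off $\mathcal{C}$ it is $R_{ij} = 0$, matching Eq.~\eqref{eq:planted}. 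Running $T$ on the lifted $(A,B)$ therefore returns ``reject $\mathbb{H}_0$'' precisely when $S$ contains the clique, and by the consistency of $T$ this answer is correct with probability tending to $1$.

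The main obstacle is not any single calculation but verifying that the lift matches the target law at the level of the \emph{full joint distribution}, including the conditional law given each $S_{ij}$, so that it is a genuine randomness-preserving reduction rather than merely matching marginals; and then aligning the asymptotic consistency of $T$ with the form of the conjecture. Since $s_0 \gg \log n$ exceeds the $\sim 2\log_2 n$ clique number of $G(n,\tfrac12)$, detection and the stated decision version agree with high probability, so a test with vanishing Type-I and Type-II error yields a polynomial-time decision procedure with vanishing error in the regime $\log n \ll s_0 \ll n^{1/2}$ that the conjecture forbids. This contradiction proves the claim, and since $\|R\|_{F} \to \infty$ as $s_0 \to \infty$ the statistical-versus-computational gap is genuine.
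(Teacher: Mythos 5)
Your proposal is correct and follows essentially the same route as the paper: a randomness-preserving polynomial-time lift of a Planted Clique instance $S$ to a pair $(A,B)$ by drawing $(A_{ij},B_{ij})$ uniformly on $\{(0,1),(1,0)\}$ when $S_{ij}=1$ and on $\{(0,0),(1,1)\}$ when $S_{ij}=0$, followed by verification against Eq.~\eqref{distribution P_n} that this reproduces the $R$-$\mathrm{ER}(\tfrac12)$ law with $R_{ij}=-1$ exactly on the planted clique and $R_{ij}=0$ elsewhere. Your added remark that $s_0 \gg \log n$ exceeds the clique number of $G(n,\tfrac12)$, so detection and the decision version coincide with high probability, is a minor tightening of a step the paper leaves implicit.
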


\section{Independence testing in the graphon model}
\label{graphon Model}
\subsection{Same Marginal Distribution}\label{general}
We now describe independence testing when $A$ and $B$ are, {\em marginally}, generated from the class of latent positions models or graphons \citep{hoff2002latent,bollobas2007phase,lovasz12:_large}. In particular, we will derive an asymptotically valid and consistent test procedure that also runs in time polynomial in $n$. We first define the notion of a pair of $R$-correlated latent position graphs where the correlation matrix $R$ is also generated from a collection of latent positions.  This is a natural extension of the latent positions model for a single graph to the setting of two graphs sharing a common vertex set with edges that are possibly pairwise correlated. 
\begin{mydef}\label{def_graphon}
Let $\{X_1, \dots, X_n\} \subset U \subset \mathbb{R}^{d}$, and $\{Y_1, \dots, Y_n\} \subset V \subset \mathbb{R}^{d'}$ be two collections of {\em latent positions}. Now let $h$ be a symmetric bivariate function from $\mathbb{R}^{d} \times \mathbb{R}^{d}$ to $[0,1]$ and $g$ be a symmetric bivariate function from $\mathbb{R}^{d'} \times \mathbb{R}^{d'}$ to $[-1,1]$; assume also that $h$ and $g$ do not depend on $n$. Let $\rho_{n} \in [0,1]$ and $\gamma_n \in
[0,1]$ and define the $n \times n$ matrices $P$ and $R$ by 
\begin{equation*}
        P_{ij}	=	\rho_{n} \cdot h(X_i, X_j), \quad R_{ij}	=   \gamma_{n} \cdot g(Y_i, Y_j)
 \end{equation*}
where we have implicitly assumed that $\gamma_n$ and $g$ are chosen so that $R_{ij}$ satisfies the constraint in  Eq.~\eqref{eq:r_ij_bound1} for all $i,j$. Given $P$ and $R$, we generate $(A, B)$ as in Definition~\ref{def_1}. We then say that $(A, B)$ is a pair of correlated latent position graphs with correlation matrix $R$ and edge probabilities matrix $P$. 
\end{mydef}

\begin{remark}
  In Definition~\ref{def_graphon}, the factor $\rho_n$ controls the {\em sparsity} of the observed graphs $A$ and $B$ while the factor $\gamma_n$ controls the magnitudes of the pairwise correlations $\{R_{ij}\}$. Note that the average degrees for both $A$ and $B$ is $\Theta(n \rho_n)$; in the following presentation, we will allow for the possibility that $\rho_n \rightarrow 0$ and $\gamma_n \rightarrow 0$ as $n \rightarrow \infty$.
\end{remark}
 
Let $(A, B)$ be a pair of graphs generated according to Definition~\ref{def_graphon} and $C$ be the matrix with entries
\begin{equation}
  \label{eq:Cmatrix}
    C_{ij} = \begin{cases} 1 & \text{if $A_{ij} + B_{ij} > 0$} \\
    0 & \text{otherwise} \end{cases}
\end{equation}
Denote by $H$ the matrix whose entries are $H_{ij} = \mathbb{E}[C_{ij}]$, i.e.,
\begin{equation}
  \label{eq:graphon_H}
    \begin{split}
    	H_{ij} &=	2P_{ij} - P_{ij}^2 - R_{ij} P_{ij}(1 - P_{ij})  \\ &= \rho_n h(X_i,
        X_j)  + (1 -  \gamma_n g(X_i, X_j)) \rho_n h(X_i, X_j) (1 - \rho_n h(X_i, X_j)).
    \end{split}
\end{equation}
We now consider testing $\mathbb{H}_{0} \colon \|R\|_{F} = 0$ against $\mathbb{H}_{A} \colon \|R\|_{F} > 0$. Our test statistic is based on estimating $R$ using singular value thresholding (USVT). More specifically, we first compute an estimate $\hat{P}_1$ (resp. $\hat{P}_2$) of $P$ by truncating the singular value decomposition of $A$ (resp. $B$) to keep only the $k_A$ (resp. $k_B$) largest singular values. Here $k_A := \{\max k \colon \sigma_k(A) \geq c_0 \sqrt{n \rho_n}\}$, the $\sigma_1(A) \geq \sigma_2(A) \geq \dots \geq 0$ are the singular values of $A$, and $c_0 > 4$ is a universal constant; the value $k_B$ is defined analogously. See \cite{chatterjee2015matrix,xu2017rates} for more details. We also apply the same operations to $C$ and obtain an estimate $\hat{H}$ of $H$. Let $\circ$ denote the Hadamard product for matrices. Then under $\mathbb{H}_0$ we have $\|H - 2P + P \circ P\|_{F} = 0$ and thus we consider a test statistic based on $\|\hat{H} - 2 \hat{P} + \hat{P} \circ \hat{P}\|_{F}$ where $\hat{P} = (\hat{P}_1 + \hat{P}_2)/2$. Leveraging recent results on the estimation error of SVT for graphon estimation \citep{xu2017rates,chatterjee2015matrix}, we obtain the following consistency guarantee for our test procedure. 

\begin{mythm}\label{theorem 1}
    Let $(A, B)$ be a pair of graphs generated according to the model in Definition~\ref{def_graphon}, where $g$ and $h$ are fixed functions and do not vary with $n$. Assume that both $g$ and $h$ are at least $s$ times differentiable for some $s \geq 1$, where $s$ is assumed known, and that $n \rho_n = \omega(\log n)$ as $n$ increases. Let $\alpha = \tfrac{s + d + d'}{2s + d + d'}$ and define 
    $$T(A, B) = \frac{\|\hat{H}
    - 2 \hat{P} + \hat{P} \circ \hat{P}\|_{F}}{\Delta^{\alpha} \log^{1/2}{n}}.$$
    where $\Delta$ is the average of the maximum degree of $A$ and $B$. Let $\mathcal{R} = \{T \colon T > 1\}$. The test statistic $T(A, B)$ with rejection region $\mathcal{R}$ yields an asymptotically valid test procedure for testing $\mathbb{H}_0 \colon \|R\|_{F} = 0$ against $\mathbb{H}_A \colon \|R\|_{F} > 0$ and furthermore, $T$ is consistent whenever $\|R \circ (P - P \circ P)\|_{F} = \Omega((n \rho_n)^{\alpha'})$  for any $\alpha' > \alpha$ as $n \rightarrow \infty$. 
\end{mythm}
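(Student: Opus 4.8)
The plan is to reduce the entire statement to a single estimation-error bound for the USVT estimators and then read off both Type-I and Type-II control from the triangle inequality. The starting point is the algebraic identity coming from Eq.~\eqref{eq:graphon_H}: writing $M := H - 2P + P\circ P$, one has $M = -R\circ(P - P\circ P)$, so that $\|M\|_{F} = \|R\circ(P - P\circ P)\|_{F}$ is exactly the population quantity appearing in the consistency hypothesis, and $\|M\|_{F} = 0$ precisely under $\mathbb{H}_0$. The estimator uses $\hat M := \hat H - 2\hat P + \hat P\circ\hat P$, so the test reduces to comparing $\|\hat M\|_{F}$ against the threshold $\Delta^{\alpha}\log^{1/2}n$. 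A key preliminary observation is that the matrix $C$ defined in Eq.~\eqref{eq:Cmatrix} is itself the adjacency matrix of an inhomogeneous Erd\H{o}s-R\'{e}nyi graph: its entries $C_{ij} = \mathbf{1}\{A_{ij}+B_{ij}>0\}$ are independent across $i<j$ (because the pairs $(A_{ij},B_{ij})$ are mutually independent) and Bernoulli with mean $H_{ij}$. Moreover $H_{ij}$ is a fixed symmetric function of the concatenated latent positions $Z_i = (X_i, Y_i)\in\mathbb{R}^{d+d'}$, built from $h$ and $g$ by products and sums and scaled by $\rho_n$; hence $H$ is an $s$-times differentiable graphon of effective dimension $D := d + d'$ and sparsity $\Theta(\rho_n)$. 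This is what lets USVT applied to $C$ estimate $H$ at the rate governed by $\alpha = \tfrac{s+D}{2s+D}$.

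Next I would decompose
\[
\hat M - M = (\hat H - H) - 2(\hat P - P) + (\hat P\circ\hat P - P\circ P)
\]
and bound each term. The Hadamard-square term is controlled entrywise through $|\hat P_{ij}^2 - P_{ij}^2| = |\hat P_{ij}-P_{ij}|\,|\hat P_{ij}+P_{ij}|$, so that $\|\hat P\circ\hat P - P\circ P\|_{F} \le 2\|\hat P - P\|_{F}$ using boundedness of the entries in $[0,1]$, making it of the same or smaller order as $\|\hat P - P\|_{F}$. The two remaining terms are handled by the SVT estimation-error rates of \cite{chatterjee2015matrix,xu2017rates}: since $h$ has latent dimension $d$ and smoothness $s$ one obtains $\|\hat P - P\|_{F} = O_P\big((n\rho_n)^{(s+d)/(2s+d)}\big)$, while since $H$ has effective dimension $d+d'$ and smoothness $s$ one obtains $\|\hat H - H\|_{F} = O_P\big((n\rho_n)^{\alpha}\big)$. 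Because $\tfrac{s+d}{2s+d} < \tfrac{s+d+d'}{2s+d+d'} = \alpha$, the $\hat H$ term dominates, and combined with the standard Chernoff concentration $\Delta \asymp n\rho_n$ (valid since $n\rho_n = \omega(\log n)$) this yields the master bound $\|\hat M - M\|_{F} = O_P(\Delta^{\alpha})$.

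From this the two conclusions follow quickly. For asymptotic validity, under $\mathbb{H}_0$ we have $M = 0$, so $T(A,B) = \|\hat M - M\|_{F}/(\Delta^{\alpha}\log^{1/2}n) = O_P(\log^{-1/2}n) \to 0$ in probability, whence $\mathbb{P}(T>1)\to 0$ and the Type-I error vanishes; the extra $\log^{1/2}n$ in the denominator is precisely the slack that absorbs any lower-order logarithmic factors hidden in the USVT bound. For consistency, under the alternative with $\|M\|_{F} = \|R\circ(P-P\circ P)\|_{F} = \Omega((n\rho_n)^{\alpha'})$ for some $\alpha' > \alpha$, the reverse triangle inequality gives $\|\hat M\|_{F} \ge \|M\|_{F} - \|\hat M - M\|_{F}$, so that
\[
T(A,B) \ge \frac{\Omega(\Delta^{\alpha'}) - O_P(\Delta^{\alpha})}{\Delta^{\alpha}\log^{1/2}n} = \Omega\Big(\frac{\Delta^{\alpha'-\alpha}}{\log^{1/2}n}\Big) - O_P(\log^{-1/2}n),
\]
which tends to infinity because the signal of order $\Delta^{\alpha'}$ dominates the noise level $\Delta^{\alpha}\log^{1/2}n$ whenever $\alpha'>\alpha$ under the stated growth of $n\rho_n$; hence $\mathbb{P}(T>1)\to 1$ and the Type-II error vanishes.

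The main obstacle is the middle step: establishing $\|\hat H - H\|_{F} = O_P((n\rho_n)^{\alpha})$ with the correct exponent $\alpha$. This requires translating the graphon-estimation guarantees of \cite{xu2017rates,chatterjee2015matrix} --- typically stated for a single dense graphon in a fixed latent dimension --- into the present setting, where \emph{(i)} the relevant graphon $H$ lives in the \emph{combined} latent dimension $d+d'$, so its smoothness-to-dimension ratio produces exactly $\alpha = \tfrac{s+D}{2s+D}$; \emph{(ii)} the graphs are sparse, with the factor $\rho_n$ entering the bound as $\tfrac{1}{n^2}\|\hat H - H\|_{F}^2 \lesssim \rho_n^2 (n\rho_n)^{-2s/(2s+D)}$, which rearranges to $\|\hat H - H\|_{F} \lesssim (n\rho_n)^{\alpha}$; and \emph{(iii)} the data-driven singular-value threshold $k_B$ with constant $c_0 > 4$ indeed recovers the smooth structure of $H$. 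Verifying that $H$ is genuinely $s$-smooth as a function of $Z_i$, that $H_{ij} = \Theta(\rho_n)$, and that the degree normalization $\Delta\asymp n\rho_n$ concentrates are the supporting facts needed to make this rate rigorous.
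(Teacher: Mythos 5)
Your proposal is correct and follows essentially the same route as the paper's proof: the identity $H - 2P + P\circ P = -R\circ(P - P\circ P)$ from Eq.~\eqref{eq:graphon_H}, the observation that $C$ is an inhomogeneous Erd\H{o}s--R\'{e}nyi graph whose graphon lives in the combined latent dimension $d+d'$, the triangle-inequality decomposition with $\|\hat{P}\circ\hat{P} - P\circ P\|_F \le 2\|\hat{P}-P\|_F$, and the USVT rates of \citet{xu2017rates} with the dominant exponent $\alpha$. The only differences are cosmetic: you use a single master bound $\|\hat{M}-M\|_F = O_P(\Delta^{\alpha})$ in both regimes where the paper uses the slightly sharper $d$-dimensional rate under $\mathbb{H}_0$ (both suffice because of the $\log^{1/2}n$ slack), and you make explicit the concentration $\Delta \asymp n\rho_n$ that the paper leaves implicit.
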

We note that the assumption $n \rho_n = \omega(\log n)$ in Theorem~\ref{theorem 1} guarantees that the average degrees of the observed graphs grow slightly faster than $\log n$. This then allows the singular value thresholding step to yield reasonably accurate estimates $\hat{P}$ and $\hat{H}$. See for example the condition in Eq.~3 of \citet{xu2017rates}. Furthermore, the consistency regime for $T$ in Theorem~\ref{theorem 1} is stated in terms of $\|R \circ (P - P \circ P)\|_{F}$ as opposed to $\|R\|_{F}$. This is expected as the entries of $R \circ (P - P \circ P)$, which are $R_{ij} P_{ij} (1 - P_{ij})$, correspond to the difference between the edge probabilities matrix of $(A,B)$ under $\mathbb{H}_0$ and $\mathbb{H}_A$. If we assume that the link function $h$ satisfies $h(x,x') > 0$ for all $(x, x') \in U \times U$ then $P_{ij} = \Omega(\rho_n)$ for all $i,j$, and the condition $\|R \circ (P - P \circ P)\|_{F} = \Omega((n \rho_n)^{\alpha'} )$ simplifies to $\rho_n \|R\|_{F} = \Omega((n \rho_n)^{\alpha'} )$, or equivalently that $\|R\|_{F} = \Omega(n^{\alpha'} \rho_n^{\alpha' - 1})$. As $\alpha$ decreases when $s$ increases, we see that smoother $h$ and $g$ lead to a sharper consistency threshold for $\|R\|_{F}$. As a special case of Theorem~\ref{theorem 1}, suppose that both $g$ and $h$ are {\em infinitely} differentiable. We then have $\alpha \leq 1/2 + \epsilon$ for any $\epsilon > 0$ and our test procedure is consistent whenever $\|R \circ (P - P \circ P)\|_{F} = \Omega((n\rho_n)^{1/2 + \epsilon})$ for any $\epsilon > 0$. More specifically we have the following corollary.
  
\begin{corollary}
    \label{cor:lpg}
    Consider the setting in Theorem~\ref{theorem 1} and suppose that $g$ and $h$ are both infinitely differentiable. Now choose an arbitrary $\epsilon > 0$ not depending on $n$ and define 
    $$T(A, B) = \frac{\|\hat{H} - 2 \hat{P} + \hat{P} \circ \hat{P}\|_{F}}{\Delta^{1/2 + \epsilon/2}}.$$
    Let the rejection region be given by $\mathcal{R} = \{T \colon T >   1\}$. Then the test statistic $T(A, B)$ with rejection region $\mathcal{R}$ yields an asymptotically valid test procedure for testing $\mathbb{H}_0 \colon \|R\|_{F} = 0$ against $\mathbb{H}_A \colon \|R\|_{F} > 0$ and furthermore, $T$ is consistent whenever $\|R \circ (P - P \circ P)\|_{F} =  \Omega((n \rho_n)^{1/2 + \epsilon})$ as $n \rightarrow \infty$.  
\end{corollary}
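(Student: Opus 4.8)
The plan is to deduce Corollary~\ref{cor:lpg} from Theorem~\ref{theorem 1} by exploiting the fact that infinitely differentiable $g$ and $h$ are $s$-times differentiable for \emph{every} $s \geq 1$, so that we are free to take $s$ as large as we wish. Writing $\alpha = \tfrac{s+d+d'}{2s+d+d'} = \tfrac12 + \tfrac{d+d'}{2(2s+d+d')}$, we see that $\alpha \downarrow \tfrac12$ as $s \to \infty$. Hence, given the fixed $\epsilon > 0$, the first step is to fix an $s$ large enough that $\alpha < \tfrac12 + \tfrac{\epsilon}{2}$, and to set $\delta := \tfrac12 + \tfrac{\epsilon}{2} - \alpha > 0$. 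This $\delta$ is the polynomial slack separating the estimation error from the normalization $\Delta^{1/2+\epsilon/2}$, and a second slack of the same order separates $\Delta^{1/2+\epsilon/2}$ from the signal $(n\rho_n)^{1/2+\epsilon}$.

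The second step is to record the quantities underlying Theorem~\ref{theorem 1}. Abbreviate $M = H - 2P + P\circ P$ and $\hat{M} = \hat{H} - 2\hat{P} + \hat{P}\circ\hat{P}$, so that the numerator of the test statistic is $\|\hat{M}\|_{F}$. From \eqref{eq:graphon_H} one has $M_{ij} = -R_{ij}P_{ij}(1-P_{ij})$, i.e. $M = -\,R\circ(P-P\circ P)$, and therefore $\|M\|_{F} = \|R\circ(P-P\circ P)\|_{F}$. The USVT estimation guarantees of \citet{xu2017rates,chatterjee2015matrix} that drive the proof of Theorem~\ref{theorem 1} give $\|\hat{P} - P\|_{F} = O_{P}(\Delta^{\alpha}\log^{1/2}n)$ and $\|\hat{H} - H\|_{F} = O_{P}(\Delta^{\alpha}\log^{1/2}n)$; combining these with $\|\hat{P}\circ\hat{P} - P\circ P\|_{F} \leq 2\|\hat{P} - P\|_{F}$ (using $\|\hat{P}+P\|_{\max}\leq 2$) yields the single bound $\|\hat{M} - M\|_{F} = O_{P}(\Delta^{\alpha}\log^{1/2}n)$. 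Here we also use that $n\rho_n = \omega(\log n)$ forces the maximum degrees of $A$ and $B$, and hence $\Delta$, to concentrate, so that $\Delta = \Theta(n\rho_n)$.

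With these in hand the two halves of the claim are short. For asymptotic validity, under $\mathbb{H}_0$ we have $R = 0$, hence $M = 0$ and $\|\hat{M}\|_{F} = \|\hat{M} - M\|_{F} = O_{P}(\Delta^{\alpha}\log^{1/2}n)$. Dividing by $\Delta^{1/2+\epsilon/2} = \Delta^{\alpha}\,\Delta^{\delta}$ gives $T(A,B) = O_{P}(\log^{1/2}n/\Delta^{\delta}) = o_{P}(1)$, since $\log^{1/2}n = o(\Delta^{\delta})$, and therefore $\mathbb{P}(T > 1 \mid \mathbb{H}_0) \to 0$. For consistency, under $\mathbb{H}_A$ with $\|M\|_{F} = \|R\circ(P-P\circ P)\|_{F} = \Omega((n\rho_n)^{1/2+\epsilon})$, the reverse triangle inequality gives $\|\hat{M}\|_{F} \geq \|M\|_{F} - \|\hat{M} - M\|_{F} = \Omega((n\rho_n)^{1/2+\epsilon})$, the estimation error being of strictly smaller polynomial order. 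Dividing by $\Delta^{1/2+\epsilon/2} = \Theta((n\rho_n)^{1/2+\epsilon/2})$ gives $T(A,B) = \Omega((n\rho_n)^{\epsilon/2}) \to \infty$, so $\mathbb{P}(T > 1 \mid \mathbb{H}_A) \to 1$.

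The main obstacle is not the specialization itself but the estimation bound $\|\hat{M} - M\|_{F} = O_{P}(\Delta^{\alpha}\log^{1/2}n)$ invoked in the second step, which is exactly the content that makes Theorem~\ref{theorem 1} work and which rests on the USVT convergence rates for smooth, possibly sparse, graphons; granting this, the corollary reduces to comparing three polynomial scales. The only genuinely new bookkeeping is confirming that the $\log^{1/2}n$ factor present in the normalization of Theorem~\ref{theorem 1} is absorbed by the polynomial slack $\Delta^{\delta} = \Theta((n\rho_n)^{\delta})$, i.e. that $\log^{1/2}n = o(\Delta^{\delta})$. This is where the growth condition $n\rho_n = \omega(\log n)$ and the freedom to enlarge $s$ (shrinking $\alpha$ toward $\tfrac12$ and hence enlarging $\delta$ toward $\tfrac{\epsilon}{2}$) are used; the inequality holds comfortably whenever $\rho_n$ decays at most polynomially, which is the regime of interest.
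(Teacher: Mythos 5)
Your argument is correct in substance but takes a genuinely different route from the paper. The paper's proof replaces the finite-smoothness USVT bounds in Eq.~\eqref{eq:theorem1_xu1}--\eqref{eq:theorem1_xu2} with the sharper rate from Theorem~4 of \citet{xu2017rates} for infinitely differentiable link functions, namely $\|\hat{P}-P\|_F = O((n\rho_n)^{1/2}\log^{d/2}(n\rho_n))$, and then reruns the Theorem~\ref{theorem 1} argument; the polylogarithmic factor is absorbed because it is a power of $\log(n\rho_n)$, which is $o((n\rho_n)^{\epsilon/2})$ whenever $n\rho_n \to \infty$. You instead exploit that an infinitely differentiable function is $s$-times differentiable for every $s$, take $s$ large enough that $\alpha = \tfrac{s+d+d'}{2s+d+d'} < \tfrac12 + \tfrac{\epsilon}{2}$, and compare polynomial scales. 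This is more self-contained (no appeal to a second estimation theorem) and reaches the same conclusion. One caveat: you import a spurious $\log^{1/2} n$ factor into the estimation error, writing $\|\hat{M}-M\|_F = O_P(\Delta^{\alpha}\log^{1/2} n)$, whereas the bound actually used in the proof of Theorem~\ref{theorem 1} is $O_P((n\rho_n)^{\alpha})$ with no logarithm (the $\log^{1/2} n$ in that theorem's statistic sits in the \emph{normalization}, to swallow the unknown constant, not in the error bound). That extra factor is what forces you to verify $\log^{1/2} n = o(\Delta^{\delta})$ and to append the condition that $\rho_n$ decays at most polynomially --- a hypothesis not present in Corollary~\ref{cor:lpg} and one that can fail under the stated assumption $n\rho_n = \omega(\log n)$ (e.g.\ $n\rho_n = \log n \log\log n$ with $\delta$ small). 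Dropping the spurious logarithm removes this issue entirely and your comparison of exponents then closes the proof without further conditions.
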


\begin{remark}
  If we suppose that (1) $h(x,x') > 0$ for all $x,x' \in U$ and (2) $g(y, y') > 0$ for all $y, y' \in V$ then $\|R \circ (P - P \circ P)\|_{F} = \Theta(n \gamma_n \rho_n )$ and the condition $\|R \circ (P - P \circ P)\|_{F} = \Omega((n \rho_n)^{1/2 + \epsilon})$ in Corollary~\ref{cor:lpg} is equivalent to the condition $\gamma_n = \Omega((n \rho_n)^{-1/2 + \epsilon})$, i.e., the test procedure is consistent whenever the pairwise correlations decay to $0$ slower than the reciprocal of the square root of the average degree. 
\end{remark}
  
\begin{small}
\begin{algorithm}[tp]  
  \caption{Bootstrap procedure for graphons}  
  \label{Graphon_Model_Testing_Simulation_2}  
  \begin{algorithmic}
    \Require Adjacency matrices $A$ and $B$, both of size $n \times n$, significance level $\alpha \in (0,1)$, number of bootstrap samples $m$.
     \State (A) Compute the matrix $C$ whose entries are $C_{ij} = 1$ if $A_{ij} + B_{ij} > 0$ and $C_{ij} = 0$ otherwise. 
    \State (B) Compute $\hat{P}_1, \hat{P}_2$ and $\hat{H}$ by applying universal singular value thresholding (USVT) on $A$, $B$, and $C$, respectively. 
    \State (C) Let $\hat{P}=\frac{1}{2}(\hat{P}_1+\hat{P}_2)$ and calculate the test statistic $T = \|\hat{H} - 2 \hat{P} + \hat{P} \circ \hat{P}\|_{F}$.
    \For{$s=1$ to $m$}
       \State (i) Generate adjacency matrices $(A^{(s)}, B^{(s)})$ according to Definition~\ref{def_1} with $R = 0$ and marginal edge probabilities matrix $\hat{P}$. 
        \State (ii) Compute $\hat{P}_1^{(s)}$ and $\hat{P}_2^{(s)}$ as the universal singular value threshold of $A^{(s)}$ and $B^{(s)}$, respectively. 
       \State (iii) Calculate $T^{(s)} = \|\hat{H}^{(s)}- 2 \hat{P}^{(s)} + \hat{P}^{(s)} \circ \hat{P}^{(s)}\|_{F}$, where $\hat{P}^{(s)}=\frac{1}{2}(\hat{P}_1^{(s)}+\hat{P}_2^{(s)})$ and $\hat{H}^{(s)}$ is the universal singular value thresholding of $A^{(s)} + B^{(s)}$.
      \EndFor
      \State (D) Set $c_{\alpha}$ to be the $(1 - \alpha) \times 100\%$ percentile of the $\{T^{(s)}\}_{s=1}^{m}$
      \State \textbf{Output} If $T>c_{\alpha}$ then reject $\mathbb{H}_0$; otherwise fail to reject $\mathbb{H}_0$. 
  \end{algorithmic}  
\end{algorithm}
\end{small}

If $g$ and $h$ are not infinitely differentiable then the test statistic in Theorem~\ref{theorem 1} depends on knowing (1) a lower bound for the smoothness $s$ of the functions $g$ and $h$ and (2) upper bounds for the dimensions $d$ and $d'$ of the latent positions $\{X_i\}$ and $\{Y_i\}$. These values are most likely unknown in practice. Furthermore, even when $s, d$ and $d'$ are known, for finite sample the rejection region in Theorem~\ref{theorem 1} is likely to be overly conservative. This is a common issue in many graph testing problems whose test statistics have no known non-degenerate limiting distribution; see for example the test statistics in \citet{tang2017semiparametric}, \citet{ghoshdastidar2020two}, and \citet{gretton2012kernel}. In light of these limitations, for this paper we will instead consider the {\em unnormalized} test statistic 
$\tilde{T}(A, B) = \|\hat{H} - 2 \hat{P} + \hat{P} \circ \hat{P}\|_{F}$ and use a bootstrap procedure to determine the rejection region for $\tilde{T}$. More specifically the bootstrap procedure generates additional pairs of graphs $\{A_b, B_b\}_{b=1}^{B}$ from the estimated edge  probabilities matrix $\hat{P}$ with $R = 0$ and then computes the test statistics for these bootstrap pairs to obtain an empirical distribution for the test statistics under the null hypothesis. See Algorithm~\ref{Graphon_Model_Testing_Simulation_2} for more details. Note that $\tilde{T}(A, B)$ differs from $T(A, B)$ only in the term $\Delta^{\alpha} \log^{1/2}{n}$. This term is a normalizing factor that guarantees $T(A, B) < 1$ when $\|R\|_{F} = 0$ and $T(A, B) \rightarrow \infty$ when $\|R\|_{F} = \Omega((n \rho_n)^{\alpha'})$. Hence, by using $\tilde{T}(A, B)$ and bootstrapping, we circumvent the need to know/estimate $\alpha$ (a possibly non-trivial if not impossible task).

\subsection{Detection thresholds for stochastic blockmodels}
\label{sec:SBM}
We now consider the special case of independence testing when the graphs $A$ and $B$ are, {\em marginally}, stochastic blockmodel graphs with a common block structure. We begin by formulating an extension of the stochastic blockmodel \citep{holland1983stochastic} for a single graph to the case of a pair of graphs whose edge correlations also exhibit a block structure. This extension had appeared previously in the literature in the context of graph matching (see e.g., \cite{onaran2016optimal,racz2021correlated,lyzinski2014seededER}).

\begin{mydef}
  \label{def:sbm}
  Let $A$ and $B$ be graphs on $n$ vertices. We say that $(A, B)$ is a pair of $K$-blocks correlated stochastic blockmodel graphs with common community assignments $\tau$, block probabilities matrices $\Theta_P$ and $\Theta_Q$, sparsity factor $\rho_n$ and block correlations matrix $\Theta_R$ if $(A,B)$ are generated from the $R$-$\mathrm{ER}(P,Q)$ model where
  \begin{enumerate}
    \item $\Theta_P$ and $\Theta_Q$ are $K \times K$ {\em symmetric} matrices with entries in $[0,1]$.
    \item Marginally, the edge probabilities matrix for $A$ and $B$ are $P = \rho_n Z \Theta_P Z^{\top}$ and $Q = \rho_n Z \Theta_Q Z^{\top}$ where $Z$ is a $n \times K$ matrix such that, for any $k \in \{1,2,\dots,K\}$, we have $Z_{ik} = 1$ if $\tau_i = k$ and $Z_{ik} = 0$ otherwise. 
    \item The correlation matrix $R$ is $R = Z \Theta_R Z^{\top}$. 
  \end{enumerate}
\end{mydef}
Let $(A, B)$ be generated from the model in Definition~\ref{def:sbm}. We now describe a test statistic $T$ for testing $\mathbb{H}_0 \colon \|R\|_{F} = 0$ against $\mathbb{H}_A \colon \|R\|_{F} > 0$ with the following properties (1) $T$ has a central $\chi^2$ limiting distribution under the null hypothesis and (2) $T$ is consistent under the alternative hypothesis provided that $\|R\|_{F} \rightarrow \infty$.

First, compute the matrix $C$ as defined in Eq.~\eqref{eq:Cmatrix} and cluster the vertices of $C$ into $K$ communities using a community detection algorithm that guarantees exact recovery (see e.g., \cite{abbe2017community,gao2017achieving,lyzinski2014perfect}) where the value of $K$ can be chosen using model selection procedures such as those described in \cite{li_network,wang_bickel_llr,lei2014}. Let $\hat{\tau}$ be the resulting estimated community assignment. Next, compute, for $1 \leq k \leq \ell \leq K$, the Pearson sample correlation $\hat{\rho}_{k \ell}$ between the edges $\{A_{ij}, B_{ij}\}$ in the $(k,\ell)$th block, i.e., $$\hat{\rho}_{k \ell} = \mathrm{cor}\Bigl(\{A_{ij} \colon i < j, \hat{\tau}_i = k, \hat{\tau}_j = \ell\}, \{B_{ij} \colon i < j, \hat{\tau}_i = k, \hat{\tau}_j = \ell\}\Bigr).$$ Let $\hat{n}_k = |\{ i \colon \hat{\tau}_i = k\}|$ for all $k$ and define the test statistic $$T(A, B) = \sum_{k \leq \ell} \hat{n}_{k \ell} \hat{\rho}_{k \ell}^{2}$$
where $\hat{n}_{kk} = \tbinom{\hat{n}_k}{2}$ if $k = \ell$ and $\hat{n}_{k \ell} = \hat{n}_k \hat{n}_{\ell}$ if $k \not = \ell$.
We then have the following result.
\begin{mythm}
\label{thm:SBM}
Let $(A,B)$ be a pair of graphs on $n$ vertices generated according to Definition~\ref{def:sbm} where $\Theta_P$ and $\Theta_Q$ are fixed $K \times K$ matrices not depending on $n$. Suppose that, as $n$ increases, we have $n_k = |\{i \colon \tau_i = k\}| = \Theta(n)$ for all $k \in\{1,2,\dots,K\}$ and the sparsity $\rho_n$ satisfies $n \rho_n = \omega(\log n)$. Then under $\mathbb{H}_0 \colon \|R\|_{F} = 0$ we have $T(A, B) \rightsquigarrow \chi^2_{K(K+1)/2}$ as $n \rightarrow \infty$. Furthermore let $\mu > 0$ be a finite constant such that $\|R\|_{F} > 0$ satisfies
\begin{equation}
\label{eq:local_alternative}
\sum_{k \leq \ell} n_{k \ell} \Theta_R^2(k,\ell) \rightarrow \mu
\end{equation}
as $n \rightarrow \infty$ (here $\Theta_{R}(k,\ell)$ denote the $k,\ell$th entry of $\Theta_R$). We then have
$$T(A, B) \rightsquigarrow \chi^2_{K(K+1)/2}(\mu)$$ as $n \rightarrow \infty$. Here $\chi^2_{K(K+1)/2}(\mu)$ is a non-central $\chi^2$ with $K(K+1)/2$ degrees of freedom and non-centrality parameter $\mu$. 
\end{mythm}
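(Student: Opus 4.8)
The plan is to reduce the problem to an oracle version in which the community assignments are known, and then to analyze the block-wise sample correlations directly. Since $n\rho_n = \omega(\log n)$ and the community detection step is assumed to achieve exact recovery, the estimated labels satisfy $\hat\tau = \tau$ (up to relabeling) on an event $\mathcal{E}_n$ with $\mathbb{P}(\mathcal{E}_n)\to 1$; here one first checks that $C$ is marginally a stochastic blockmodel graph, since $H = \mathbb{E}[C]$ inherits a $K$-block structure from $P$, $Q$ and $R$. On $\mathcal{E}_n$ we have $\hat n_{k\ell} = n_{k\ell}$ and each $\hat\rho_{k\ell}$ is the Pearson correlation of the $n_{k\ell}$ pairs $\{(A_{ij}, B_{ij}) \colon \tau_i = k, \tau_j = \ell, i<j\}$, which are i.i.d.\ bivariate Bernoulli with marginals $\rho_n\Theta_P(k,\ell)$, $\rho_n\Theta_Q(k,\ell)$ and correlation $\Theta_R(k,\ell)$. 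Crucially, distinct blocks use disjoint, hence independent, sets of edges, so the family $\{\hat\rho_{k\ell}\}_{k\le\ell}$ is mutually independent and $T(A,B) = \sum_{k\le\ell}\bigl(\sqrt{n_{k\ell}}\,\hat\rho_{k\ell}\bigr)^2$. Everything therefore reduces to the joint limit of the $\sqrt{n_{k\ell}}\,\hat\rho_{k\ell}$.

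The technical core is a central limit theorem for a single block's sample correlation that remains valid as the marginals vanish. Writing $p = \rho_n\Theta_P(k,\ell)$, $q = \rho_n\Theta_Q(k,\ell)$, $r = \Theta_R(k,\ell)$ and $m = n_{k\ell}$, I would decompose $\hat\rho_{k\ell} = S_{AB}/\sqrt{S_{AA}S_{BB}}$ into the sample covariance and the two sample variances. For the numerator, the summands $(A_{ij}-p)(B_{ij}-q)$ are bounded, i.i.d., with mean $r\sigma_A\sigma_B$ and variance $\sim\sigma_A^2\sigma_B^2$; a Lyapunov (triangular-array) CLT applies because the Lyapunov ratio is of order $(m\sigma_A^2\sigma_B^2)^{-1/2}\asymp (m\rho_n^2)^{-1/2}\asymp (n\rho_n)^{-1}\to 0$. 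For the denominators, ratio consistency $S_{AA}/\sigma_A^2\to 1$ and $S_{BB}/\sigma_B^2\to 1$ in probability follows from concentration of the block edge counts, which holds because $m\rho_n = \Theta(n^2\rho_n)\to\infty$; this is also what guarantees the sample variances are positive so that $\hat\rho_{k\ell}$ is well defined with high probability. A Slutsky argument then yields $\sqrt{m}\,(\hat\rho_{k\ell} - r)\rightsquigarrow N(0,1)$. The main obstacle is precisely verifying that the leading-order asymptotic variance equals exactly $1$: one must check that the $(1-r^2)$-type correction and, more importantly, the sparsity correction both cancel because $\mathbb{E}[(A_{ij}-p)^2(B_{ij}-q)^2]\sim pq\sim\sigma_A^2\sigma_B^2$ and, under the relevant alternative, $r = O(1/n)\ll\rho_n$, so that the covariance of $A_{ij}$ and $B_{ij}$ is negligible against the product of their variances at this scale.

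With the per-block CLT in hand, mutual independence across blocks yields joint convergence: $\bigl(\sqrt{n_{k\ell}}(\hat\rho_{k\ell}-\Theta_R(k,\ell))\bigr)_{k\le\ell}\rightsquigarrow N(0, I)$ on $\mathbb{R}^{K(K+1)/2}$. Under $\mathbb{H}_0$ all $\Theta_R(k,\ell)=0$, so $\bigl(\sqrt{n_{k\ell}}\,\hat\rho_{k\ell}\bigr)\rightsquigarrow N(0,I)$ and the continuous mapping theorem gives $T(A,B)\rightsquigarrow\chi^2_{K(K+1)/2}$, the degrees of freedom being the number of pairs with $k\le\ell$. Under the local alternative in Eq.~\eqref{eq:local_alternative}, set $\delta_{k\ell} = \sqrt{n_{k\ell}}\,\Theta_R(k,\ell)$, so that $\sum_{k\le\ell}\delta_{k\ell}^2\to\mu$ and each $\delta_{k\ell}$ is bounded. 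Because there are finitely many blocks, a subsequence argument applies: along any subsequence choose a further one on which $\delta_{k\ell}\to\delta_{k\ell}^{*}$ for every $(k,\ell)$; then $\bigl(\sqrt{n_{k\ell}}\,\hat\rho_{k\ell}\bigr)\rightsquigarrow N(\delta^{*}, I)$ and the map $v\mapsto\|v\|^2$ gives $T\rightsquigarrow\chi^2_{K(K+1)/2}\bigl(\|\delta^{*}\|^2\bigr)=\chi^2_{K(K+1)/2}(\mu)$, since $\|\delta^{*}\|^2=\lim\sum\delta_{k\ell}^2=\mu$ and the noncentral $\chi^2$ law depends on $\delta^{*}$ only through $\|\delta^{*}\|^2$. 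As every subsequential limit is the same $\chi^2_{K(K+1)/2}(\mu)$, the whole sequence converges to it.

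Finally, since all of the above is carried out on $\mathcal{E}_n$ with $\mathbb{P}(\mathcal{E}_n)\to1$, replacing the oracle quantities by their estimated counterparts does not affect the limiting distribution. I note that one may alternatively obtain the noncentral shift from Le Cam's third lemma: the second-moment computation of Theorem~\ref{thm:main1} gives $\prod_{i<j}(1+R_{ij}^2)\to e^{\mu}<\infty$ along the sequence in Eq.~\eqref{eq:local_alternative}, so the alternative is contiguous to the null and the central limit under $\mathbb{H}_0$ transfers to the shifted limit under $\mathbb{H}_A$.
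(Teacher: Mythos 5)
Your proposal is correct and follows essentially the same route as the paper: exact recovery reduces the problem to the oracle labels, the block-wise sample correlations are mutually independent because they use disjoint edge sets, a per-block CLT gives $\sqrt{n_{k\ell}}\,\hat\rho_{k\ell}\rightsquigarrow N(\delta_{k\ell},1)$, and summing squares yields the central or noncentral $\chi^2$ limit. The only differences are that you supply the triangular-array/Lyapunov and ratio-consistency details needed because the marginals $\rho_n\Theta_P(k,\ell)$ may vanish (the paper simply invokes the classical sample-correlation CLT with variance $(1-\rho_{k\ell}^2)^2$), and you replace the paper's appeal to Guenther (1964) for the noncentral limit with a subsequence argument; both are refinements of, not departures from, the paper's proof.
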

Theorem~\ref{theorem 1} (and its associated Corollary~\ref{cor:lpg}) to Theorem~\ref{thm:SBM} we see that by assuming a more specialized structure on $R$ we obtain a much sharper detection threshold. Indeed, the local alternative in Theorem~\ref{thm:SBM} implies that our test statistic achieves power converging to $1$ whenever $\Theta_R$ satisfies the condition 
$\sum_{k \ell} n_{k \ell} \Theta_R^2(k,\ell) \rightarrow \infty$. This is equivalent to the condition that $\|R\|_{F} \rightarrow \infty$ and is thus, by Theorem~\ref{thm:main1}, both necessary and sufficient. Theorem~\ref{thm:SBM} also assumes that the block structure for $R$ is the same as that for $P$ and $Q$, and this is done purely for ease of exposition as it allows us to more easily aggregate the edges of $A$ and $B$ into blocks and estimate the pairwise correlations between the edges in each block. Extending Theorem~\ref{thm:SBM} to the case where the SBM structure for $R$ differs from that of $P$ and $Q$ is straightforward but tedious, and we leave it to the interested reader. Finally, \cite{xiong2019graph} considered a special case of Definition~\ref{def:sbm} with $R_{ij} \equiv c$ for all $\{i,j\}$, but they did not derive a non-degenerate limiting distribution for their test statistic.

\subsection{Different Marginal Distributions}\label{different}
We now discuss how the previous model and results can be extended to the case where the graphs $A$ and $B$ have different {\em marginal} distributions. We first define a model that generalizes the $R$-$\mathrm{ER}(P)$ model in Definition~\ref{def_1} (see also Remark~\ref{remark_ext}).
\begin{mydef}
  Let $n \in \mathbb{N}$. Let $P \in [0,1]^{n\times n}$, $Q \in [0,1]^{n \times n}$, and $R\in[{-1},1]^{n\times n}$ be symmetric matrices where $R$ satisfies the constraint
  \begin{equation}
  \label{eq:constraint_different}
      -\min\Bigl\{\tfrac{P_{ij}Q_{ij}}{(1-P_{ij})(1-Q_{ij})},\tfrac{(1-P_{ij})(1-Q_{ij})}{P_{ij}Q_{ij}}\Bigr\}^{1/2} \leq R_{ij}\leq\min\Bigl\{\tfrac{P_{ij}(1-Q_{ij})}{Q_{ij}(1-P_{ij})},\tfrac{Q_{ij}(1-P_{ij})}{P_{ij}(1-Q_{ij})}\Bigr\}^{1/2},
  \end{equation}
  for all $i,j$. We say that $(A,B)$ are $R$-correlated heterogeneous Erd\H{o}s-R\'{e}nyi graphs on $n$ vertices with {\em marginal} edge probabilities $(P,Q)$ and correlations $R$, denoted by $(A,B)\sim R$-$\mathrm{ER}(P,Q)$, if 
    \begin{enumerate}
        \item
            $A$ is the adjacency matrix for an inhomogeneous Erd\H{o}s-R\'{e}nyi graph on $n$ vertices with probability matrix $P$. 
        \item
            $B$ is the adjacency matrix for another inhomogeneous Erd\H{o}s-R\'{e}nyi graph on $n$ vertices with probability matrix $Q$.
        \item
            The pairs of entries $\{(A_{ij},B_{ij})\}_{1\leq i<j\leq n}$ are independent bivariate random vectors. 
        \item
            For $i < j$, $A_{ij}$ and $B_{ij}$ are correlated with Pearson correlation $R_{ij}$. 
    \end{enumerate}
\end{mydef}
Following the proof of Theorem~\ref{thm:main1}, we can show that the condition $\limsup \|R\|_{F} < \infty$ as$n \rightarrow \infty$ is also necessary for the existence of a consistent test procedure for detecting the correlation between $R$-$\mathrm{ER}(P,Q)$ graphs. A simple generative model for $R$-$\mathrm{ER}(P,Q)$ graphs is the following variant of the graphon model described in Definition~\ref{def_graphon} where, instead of only having a single $h$, we have two link functions $h_1$ and $h_2$ from $U \times U \mapsto [0,1]$ and define $P$ and $Q$ via $P_{ij} = \rho_n h_1(X_i, X_j)$ and $Q_{ij} = \rho_n h_2(X_i, X_j)$ for all $i \leq j$. The function $g$ is also chosen so that the correlation matrix $R$ with entries $R_{ij} = \gamma_n g(Y_i, Y_j)$ satisfies the constraint in Eq.~\eqref{eq:constraint_different}. Note that for ease of exposition we had assumed that the sparsity factor for both $P$ and $Q$ are the same. The case where $P$ and $Q$ have different sparsity factors involves more tedious book-keeping but is, otherwise, conceptually identical and leads to similar results as those described below.  

Given $(A, B)$ sampled from the above variant of the $R$-$\mathrm{ER}(P,Q)$ model we once again let $C$ be the matrix with entries $C_{ij} = 1$ if $A_{ij} + B_{ij} > 0$ and $C_{ij} = 0$ otherwise. Denote $H = \mathbb{E}[C]$. We then have
\begin{equation}
  \label{eq:h_form}
  H_{ij} = P_{ij} + Q_{ij} - P_{ij} Q_{ij} -
  R_{ij}\sqrt{P_{ij}(1 - P_{ij})Q_{ij}(1 - Q_{ij})}
\end{equation}
and thus we can construct a test statistic for $\mathbb{H}_0 \colon \|R\|_{F}$ against $\mathbb{H}_A \colon \|R\|_{F} > 0$ by first applying USVT to $A$ and $B$ to obtain estimates $\hat{P}$ of $P$ and $\hat{Q}$ of $Q$, then applying USVT to $C$ to obtain an estimate $\hat{H}$ of $H$, and finally compute $\tilde{T}(A, B) = \|\hat{H} - \hat{P} - \hat{Q} + \hat{P} \circ \hat{Q}\|_{F}.$ The following result is derived using an identical argument to that for Theorem~\ref{theorem 1} and shows that rejecting $\mathbb{H}_0$ for large values of $\tilde{T}$ yields a consistent test procedure.

\begin{mythm}\label{theorem 4}
    Let $(A, B)$ be a pair of graphs generated from the above $R$-$\mathrm{ER}(P,Q)$ model where $g$, $h_1$ and $h_2$ are fixed functions and do not vary with $n$. Assume that all of $g, h_1, h_2$ are at least $s$ times continuously differentiable for some $s \geq 1$, where $s$ is assumed known, and that $n \rho_n = \omega(\log n)$ as $n$ increases. Let $\alpha = \tfrac{s + d + d'}{2s + d + d'}$ and define 
    $$T(A, B) = \frac{\|\hat{H}
    - \hat{P} - \hat{Q} + \hat{P} \circ \hat{Q}\|_{F}}{\Delta^{\alpha} \log^{1/2}{n}}.$$
     where $\Delta$ is the average of the maximum degree of $A$ and $B$. Let $\mathcal{R} = \{T \colon T > 1\}$. The test statistic $T(A, B)$ with rejection region $\mathcal{R}$ yields an asymptotically valid test procedure for testing $\mathbb{H}_0 \colon \|R\|_{F} = 0$ against $\mathbb{H}_A \colon \|R\|_{F} > 0$ and furthermore, $T$ is consistent whenever $\|R \circ \Xi\|_{F} = \Omega((n \rho_n)^{\alpha'})$ for any $\alpha' > \alpha$ as $n \rightarrow \infty$. Here $\Xi$ is the $n \times n$ matrix with entries $\Xi_{ij} = \bigl(P_{ij}(1 - P_{ij}) Q_{ij}(1 - Q_{ij})\bigr)^{1/2}.$
\end{mythm}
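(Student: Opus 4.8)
The plan is to reduce Theorem~\ref{theorem 4} to a single high-probability estimate, exactly mirroring the argument behind Theorem~\ref{theorem 1}. Write $M := H - P - Q + P \circ Q$ for the population quantity and $\hat{M} := \hat{H} - \hat{P} - \hat{Q} + \hat{P} \circ \hat{Q}$ for its plug-in estimate, so that the numerator of $T(A,B)$ is $\|\hat{M}\|_F$. By Eq.~\eqref{eq:h_form} the entries of $M$ are exactly $M_{ij} = -R_{ij}\Xi_{ij}$, i.e. $M = -\,R \circ \Xi$, so $\|M\|_F = \|R \circ \Xi\|_F$. Thus $M = 0$ under $\mathbb{H}_0$, while $\|M\|_F = \Omega((n\rho_n)^{\alpha'})$ under the stated alternative. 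Everything then follows from (i) the concentration $\Delta = \Theta(n\rho_n)$ and (ii) a bound of the form $\|\hat{M} - M\|_F = O((n\rho_n)^{\alpha}\,\mathrm{polylog}(n))$ with probability $1-o(1)$.

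First I would verify $\Delta = \Theta(n\rho_n)$. Since $A$ and $B$ are inhomogeneous Erd\H{o}s--R\'{e}nyi graphs with edge probabilities $\Theta(\rho_n)$, each vertex has expected degree $\Theta(n\rho_n)$, and because $n\rho_n = \omega(\log n)$ a Bernstein/Chernoff union bound over the $n$ vertices shows both maximum degrees, hence their average $\Delta$, concentrate at $\Theta(n\rho_n)$ with high probability. This lets me replace the normalizer $\Delta^{\alpha}\log^{1/2} n$ by $(n\rho_n)^{\alpha}\log^{1/2} n$ throughout. Next I would import the USVT estimation guarantees of \citet{chatterjee2015matrix,xu2017rates}. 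Because $P$ and $Q$ come from $s$-smooth link functions on latent positions in $\mathbb{R}^{d}$ whereas $H$ comes from an $s$-smooth function of the stacked latent positions $(X_i,Y_i) \in \mathbb{R}^{d+d'}$, the associated graphon-estimation exponents are $\tfrac{s+d}{2s+d}$ for $P,Q$ and $\alpha = \tfrac{s+d+d'}{2s+d+d'}$ for $H$. Since $D \mapsto \tfrac{s+D}{2s+D}$ is increasing, estimating $H$ is the bottleneck, and all three errors $\|\hat{P}-P\|_F$, $\|\hat{Q}-Q\|_F$, $\|\hat{H}-H\|_F$ are $O((n\rho_n)^{\alpha}\,\mathrm{polylog}(n))$ with high probability.

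I would then propagate these through the decomposition
\begin{equation*}
\hat{M} - M = (\hat{H}-H) - (\hat{P}-P) - (\hat{Q}-Q) + \bigl[(\hat{P}-P)\circ Q + P \circ (\hat{Q}-Q) + (\hat{P}-P)\circ(\hat{Q}-Q)\bigr],
\end{equation*}
controlling the Hadamard terms via $\|X\circ Y\|_F \le \|X\|_F\,\|Y\|_{\max}$. The terms $(\hat{P}-P)\circ Q$ and $P\circ(\hat{Q}-Q)$ carry a gain of $\|P\|_{\max},\|Q\|_{\max} = O(\rho_n)$, and the cross term $(\hat{P}-P)\circ(\hat{Q}-Q)$ is handled using the $O(1)$ entrywise bound on the truncated USVT iterates; all of these are therefore of order at most the $P,Q$ rates and so are dominated by (or of the same order as) the $H$-estimation error. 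Hence $\|\hat{M}-M\|_F = O((n\rho_n)^{\alpha}\,\mathrm{polylog}(n))$, dominated by $\|\hat{H}-H\|_F$. Under $\mathbb{H}_0$ this gives $\|\hat{M}\|_F = \|\hat{M}-M\|_F$, which divided by the normalizer $\asymp (n\rho_n)^{\alpha}\log^{1/2}n$ tends to $0$, so $T<1$ with probability $1-o(1)$ and the Type-I error vanishes, establishing asymptotic validity. Under $\mathbb{H}_A$ with $\|R\circ\Xi\|_F = \Omega((n\rho_n)^{\alpha'})$, the reverse triangle inequality yields $\|\hat{M}\|_F \ge \|M\|_F - \|\hat{M}-M\|_F = \Omega((n\rho_n)^{\alpha'})$ because $(n\rho_n)^{\alpha'}$ dominates $(n\rho_n)^{\alpha}\,\mathrm{polylog}(n)$ once $n\rho_n \to \infty$ and $\alpha' > \alpha$; dividing by the normalizer forces $T \to \infty$, so $T>1$ eventually and the power tends to $1$.

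The main obstacle is the careful bookkeeping of the USVT rate rather than any single hard estimate. One must confirm that $H$ genuinely inherits $s$-smoothness in the \emph{combined} latent dimension $d+d'$ (so that its exponent is the claimed $\alpha$ and not something smaller), and, crucially for validity, that the $\mathrm{polylog}$ factor emerging from the USVT bound is strictly dominated by the $\log^{1/2}n$ margin built into the normalizer, which is precisely why this factor was inserted. The Hadamard cross term, though routine, also genuinely requires the entrywise ($\ell_\infty$) boundedness of the truncated USVT estimates rather than merely their Frobenius accuracy, and I would flag this as the one place where a bound beyond the imported Frobenius rates is needed.
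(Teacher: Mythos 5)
Your proposal is correct and follows essentially the same route as the paper, which simply declares Theorem~\ref{theorem 4} to be proved by the argument for Theorem~\ref{theorem 1}: USVT error rates from \citet{xu2017rates} with the exponent $\alpha$ for $\hat{H}$ dominating the exponents for $\hat{P},\hat{Q}$, a triangle-inequality comparison of $\|\hat{H}-\hat{P}-\hat{Q}+\hat{P}\circ\hat{Q}\|_F$ against $\|R\circ\Xi\|_F$, and the identification $\Delta^{\alpha}\log^{1/2}n \asymp (n\rho_n)^{\alpha}\log^{1/2}n$. Your explicit handling of the Hadamard cross term and of the entrywise boundedness of the truncated estimates is in fact slightly more careful than the paper, which treats the analogous $\|\hat{P}\circ\hat{P}-P\circ P\|_F$ step by implicitly assuming $\hat{P}_{ij}\le 1$.
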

Similar to our discussion after Corollary~\ref{cor:lpg}, if the link functions $g$, $h_1$ and $h_2$ are not infinitely differentiable then the
test statistic in Theorem~\ref{theorem 4} depends on knowing a lower bound for the smoothness $s$ and upper bounds for the dimensions $d$ and $d'$ of the latent positions $\{X_i\}$ and $\{Y_i\}$. These values are once again unknown or, even if known, the rejection region in Theorem~\ref{theorem 4} is still overly conservative for finite sample inference. We will thus also use bootstrap resampling to calibrate
the rejection region for the above test statistic $T(A, B)$; see Algorithm~\ref{Analysis_for_real_data} in Section~\ref{Real_Data} for more details.

\section{Simulation Results}
We now conduct simulation experiments to evaluate the performance of our test procedures for testing the hypothesis in Section~\ref{graphon Model}. For our first experiment, we generate $\{X_i\}_{i=1}^{n}$ as iid sample from a bivariate normal with mean $\bm{0}$ and identity covariance matrix. We then consider two different choices of link functions for $P$. The first is the cosine similarity 
\begin{equation}
\label{eq:cosine}
    P_{ij} = \frac{|X_i^{\top}X_j|}{2\|X_i\|\|X_j\|}
\end{equation}
and the second is the Gaussian kernel $P_{ij} = \exp\big(-{\|X_i-X_j\|^2}\big)/2$. Note that the rank of $P$ is $2$ and $n$ for the cosine and Gaussian similarity, respectively. Given $P$ we set $R_{ij} \equiv r$ for some value $r$ to be specified
later. We then generate a pair of graphs $(A, B) \sim R$-$\mathrm{ER}(P)$ on $n$ vertices and apply the test statistic $T$ in
Corollary~\ref{cor:lpg} to test $\mathbb{H}_0 \colon \|R\|_{F} = 0$ against $\mathbb{H}_A \colon \|R\|_{F} > 0$ where the rejection region is calibrated via bootstrapping with significance level $0.05$ (see Algorithm~\ref{Graphon_Model_Testing_Simulation_2}).  We repeat the above steps for $m = 100$ Monte Carlo replicates to obtain an empirical estimate of the type I error (when $r = 0$) and type II error (when $r > 0$) for $T$. The results are presented in Table \ref{Table:same cosine} for combinations of $n \in \{100,200,500,1000,2000\}$ and $r \in \{0,0.1,0.3,0.5\}$. We observe that the type I error of the test statistic is well-controlled and that the test statistic also exhibits power even for small values of $r$ and moderate values of $n$. Note that we fixed $k = 2$ when $P$ is the cosine similarity; here $k$ is the number of singular values used to construct $T$. In contrast, we chose $k$ via USVT (see Section~\ref{graphon Model}) when $P$ is the Gaussian similarity. 
\begin{table}[tp]
\centering
\begin{tabular}{lcccc}
\hline
$n \setminus r$ &$r=0$&$r=0.1$&$r=0.3$&$r=0.5$ \\
    \hline
    $n=100$ & 0.06 & 0.17 & 0.85 & 1\\
    $n=200$ & 0.07 & 0.98 & 1 & 1 \\
    $n=500$ & 0.04 & 1    & 1 & 1\\
    $n=1000$ & 0.02 & 1    & 1 & 1\\
    $n=2000$ & 0.01 & 1    & 1 & 1\\
    \hline

\end{tabular}
\caption{Empirical estimates (based on $m = 100$ Monte Carlo replicates) for the Type I and type II error when $P_{ij}$ is from the cosine similarity. Given values correspond to the Type-I error when $r = 0$ and to the power (i.e., one minus the Type II error) when $r > 0$.}
\label{Table:same cosine}
\end{table}

\begin{table}[tp]
\centering
\begin{tabular}{lcccc}
\hline
$n \setminus r$ &$r=0$&$r=0.1$&$r=0.3$&$r=0.5$ \\
    \hline
    $n=100$  & 0 & 0    & 0.06 & 0.74\\
     $n=200$ & 0 & 0    & 0.26 & 1\\
    $n=500$  & 0 & 0    & 1 &1\\
     $n=1000$  & 0 & 0  & 1 & 1\\
     $n=2000$ & 0 & 0.02 & 1 & 1 \\
     \hline
\end{tabular}

\caption{Empirical estimates (based on $m = 100$ Monte Carlo replicates) for the type I and type II error when $P_{ij}$ is from the Gaussian kernel. The given values correspond to the Type-I error when $r = 0$ and to the power (i.e., one minus the type II error) when $r > 0$.}
\label{Table: exponential}
\end{table}

For our second experiment we consider the case where the graphs $A$ and $B$ have different marginal distributions (see Section~\ref{different}). In particular we generate $\{X_i\}_{i=1}^{n}$ and $\{Y_i\}_{i=1}^{n}$ as iid samples from a bivariate normal with mean $\bm{0}$ and identity covariance matrix and then set $P$ and $Q$ to have entries
\begin{gather*}
    P_{ij} = \frac{|X_i^{\top} X_j|}{2 \|X_i\|\|X_j\|}, \quad \text{and} \quad Q_{ij} = \frac{|Y_i^{\top} Y_j|}{4\|Y_i\|\|Y_j\|}.
\end{gather*}
Given $P$ and $Q$ we once again set $R_{ij} \equiv r$ and generate pair of graphs $(A, B)$ on $n$ vertices from the $R$-$\mathrm{ER}(P,Q)$ model. We apply the test statistic in Theorem~\ref{theorem 4} to test $\mathbb{H}_0 \colon \|R\|_{F} = 0$ against $\mathbb{H}_A \colon \|R\|_{F} > 0$; the rejection region is also calibrated via bootstrapping with significance level $0.05$ using Algorithm~\ref{Analysis_for_real_data}. Empirical estimates of the type I and type II errors (based on $m = 100$ Monte Carlo replicates) are presented in Table \ref{Table:Different} for combinations of $n \in \{100,200,500,1000,2000\}$ and $r \in \{0,0.1,0.3,0.5\}$. Once again, the type-I error is well-controlled and the test also exhibits power even for small values of $r$ and moderate values of $n$.

\begin{table}[tp]
\centering
\begin{tabular}{lcccc}
\hline
$n \setminus r$ &$r=0$&$r=0.1$&$r=0.3$&$r=0.5$ \\
    \hline
    $n=100$ & 0 & 0.35 & 1 & 1\\
    $n=200$ & 0 & 0.83 & 1 & 1 \\
    $n=500$ & 0.11 & 1    & 1 & 1\\
    $n=1000$ & 0.01 & 1    & 1 & 1\\
    $n=2000$ & 0.02 & 1    & 1 & 1\\
    \hline
\end{tabular}
\caption{Empirical estimates (based on $m = 100$ Monte Carlo replicates) for the type $I$ and type $II$ error when $P_{ij}$ and $Q_{ij}$ are from the cosine similarity with $P \not = Q$. The given values correspond to the type I error when $r = 0$ and to the power (i.e., one minus the type II error) when $r > 0$.}
\label{Table:Different}
\end{table}

For our last experiment, we consider correlation testing when $A$ and $B$ are {\em marginally} stochastic blockmodel graphs (c.f. Section~\ref{sec:SBM}). More specifically, we assume that $A$ is a $K$-block SBM where each vertex of $A$ is assigned to some a block $k \in \{1,\dots,K\}$ with probability $1/K$ and the marginal edge probabilities between vertices in block $k$ and vertices in block $\ell$ are given by $0.45 - |k - \ell|/(2K)$ for any $k, \ell \in \{1,2,\dots,K\}$. Similarly, $B$ is a $K$-block SBM with the same membership assignment as $A$ and marginal edge probabilities $0.4 - |k - \ell|/(2K+2)$. We set $R$ to have the same block structure as both $A$ and $B$ and entries of the form $r( 1 - |k - \ell|/K)$ for values of $r$ that are specified later.  We then sample a pair $(A,B)$ from the model in Definition~\ref{def:sbm} with parameters given above and test the hypothesis that $\|R\|_{F} = 0$ against $\|R\|_{F} > 0$ using the test statistic in Theorem~\ref{thm:SBM} where the rejection region is based on the $95\%$ percentile of the $\chi^2_{K(K+1)/2}$ distribution. We repeat these steps for $m = 1000$ Monte Carlo replicates to obtain empirical estimates of the type-I and type-II error of our test statistic. The results are presented in Tables \ref{Table: SBM1} for various combinations of $n \in \{200,500, \dots, 3000\}$, $K \in \{2,5,7\}$ and $r \in \{0,0.001,0.005,0.01\}$. For comparisons we also include
the limiting theoretical power given by $F^{-1}_{\mu}(c_*)$ where $c_*$ is the $95\%$ percentile of the (central) $\chi^2_{K(K+1)/2}$ distribution and $F^{-1}_{\mu}$ is the quantile function for a {\em non-central} $\chi_{K(K+1)/2}^2$ with non-centrality parameter $\mu = r^2(\tfrac{K^2+1}{4K^2} n^2 - \tfrac{n}{2})$. We observe that the empirical type-I errors (when $r = 0$) and empirical power (when $r > 0$) are close to their limiting theoretical counterparts provided that $n$ is sufficiently large compared to $K$; indeed, as $K$ increases we generally need larger values of $n$ to achieve accurate recovery of the latent community assignments. 

\begin{table}[htp]
\centering
\label{Table: SBM1}
\begin{tabular}{lcccc}
\hline
 $K = 2$ &$r=0$&$r=0.001$&$r=0.005$&$r=0.01$ \\
    \hline
      $n=200$ & 0.044/0.050 & 0.045/0.051 & 0.056/0.069 & 0.147/0.133 \\
    $n=500$ & 0.047/0.050 & 0.055/0.055 & 0.177/0.188 & 0.621/0.641\\
    $n=1000$ & 0.056/0.050 & 0.070/0.069 & 0.638/0.642 & 1/0.999\\
    $n=2000$ & 0.048/0.050 & 0.139/0.134 & 0.998/0.999 & 1/1\\
    $n=3000$& 0.040/0.050 & 0.236/0.259 & 1/1& 1/1\\
    \hline
\end{tabular}

\begin{tabular}{lcccc}
\hline
 $K=5$&$r=0$&$r=0.001$&$r=0.005$&$r=0.01$ \\
    \hline
    $n=200$ & 0.864/0.050 & 0.755/0.050 & 0.795/0.056 & 0.888/0.076 \\
    $n=500$ & 0.055/0.050 & 0.053/0.051 & 0.103/0.093 & 0.311/0.289\\
    $n=1000$ & 0.051/0.050 & 0.067/0.056 & 0.280/0.290 & 0.940/0.931\\
    $n=2000$ & 0.054/0.050 & 0.070/0.076 & 0.936/0.932 & 1/1\\
    $n=3000$ & 0.041/0.050 & 0.111/0.116 & 1/1 & 1/1\\
    \hline
\end{tabular}

\begin{tabular}{lcccc}
\hline
 $K=7$&$r=0$&$r=0.001$&$r=0.005$&$r=0.01$ \\
    \hline
   $n=200$ & 0.954/0.050 & 0.912/0.050 & 0.955/0.054 & 0.973/0.067 \\
    $n=500$ & 0.617/0.050 & 0.560/0.051 & 0.656/0.079 & 0.826/0.208\\
    $n=1000$ & 0.058/0.050 & 0.062/0.054 & 0.218/0.209 & 0.859/0.833\\
    $n=2000$& 0.056/0.050 & 0.074/0.068 & 0.823/0.834 & 1/1\\
    $n=3000$& 0.049/0.050 & 0.104/0.094 & 1/0.999 & 1/1\\
    \hline
\end{tabular}
\caption{Empirical estimates (based on $m = 100$ Monte Carlo replicates) for the type $I$ and type $II$ error compared to the theoretical (limiting) value. Here $A$ and $B$ are $R$-correlated SBM graphs. The first (resp. second) entry in each cell correspond to the empirical estimate (resp. theoretical value) of the type I error when $r = 0$ and to the power (i.e., one minus the type II error) when $r > 0$. The theoretical values are based on the non-central $\chi^2$ distribution with non-centrality parameter $\mu=r^2(\frac{K^2+1}{4K^2}n^2-\frac{n}{2})$.}
\end{table}

\section{Real Data Experiments}\label{Real_Data}
\subsection{Analysis of C. elegans Data}
\label{sec:c_elegans}
We now apply the test statistics in Section~\ref{graphon Model} to the connectomes of the {\em C. elegans} roundworm. More specifically, we used the wiring diagram formed by the somatic nervous system, which consists of 
$279$ neurons; these neurons are classified into one of three categories namely motor neurons, sensory neurons, and inter-neuron. There are two types of connections between the neurons, i.e., either via chemical synapses or electrical gap junctions. This result in two related but distinct networks, namely a chemical synapse network $A_c$ with $6394$ edges and a gap junction network $A_g$ with $1777$ edges. See \cite{varshney2011structural} for a more detailed description of the construction of these connectomes.

We first consider testing the null hypothesis that $A_c$ and $A_g$ are independent against the alternative hypothesis that they are correlated. As the two graphs have a quite large
difference in the edge densities, we suppose that $A_c$ and $A_g$ are generated from the $R$-$\mathrm{ER}(P,Q)$ model (see Section~\ref{different}) and use the test statistic $T$ in
Theorem~\ref{theorem 4} with $k = 3$ as the rank for the estimates $\hat{P}$ and $\hat{Q}$; the choice $k = 3$ is motivated by the fact that there are three categories of neurons. This result in an observed value of $T = 8.313$. We calibrate our test statistic using the bootstrapping procedure in Algorithm~\ref{Analysis_for_real_data} with $m = 10000$ Monte Carlo replicates and obtain an approximate $p$-value of $5 \times 10^{-5}$. There is thus strong evidence to reject the null hypothesis in favor of the alternative hypothesis that the two connectomes are correlated. This conclusion, while biologically relevant, is also certainly expected. 

We now quantify the degree of correlation between the edges of the two graphs. Recalling Eq.~\eqref{eq:h_form} we first compute an estimate of the correlation $R_{ij}$ between the edges of $A_c$ and $A_g$ via 
\begin{equation}
  \label{eq:estimate_R}
    \hat{R}_{ij}:=
    \begin{cases}
    0, & \text{if $\hat{P}_{ij}$ or $\hat{Q}_{ij}\in\{0,1\}$}\\
    \max\Big\{\min\Big\{\tfrac{\hat{P}_{ij}+\hat{Q}_{ij} -
      \hat{P}_{ij}\hat{Q}_{ij} - \hat{H}_{ij}}{\bigl(\hat{P}_{ij}(1-\hat{P}_{ij})\hat{Q}_{ij}(1-\hat{Q}_{ij})\bigr)^{1/2}},1\Big\},-1\Big\}, & \text{otherwise.}
    \end{cases}
\end{equation}
where $\hat{P}, \hat{Q}$ and $\hat{H}$ are obtained by applying USVT to the adjacency matrices $A$, $B$, and $C$ respectively; recall that $C_{ij} = 1$ if $A_{ij} + B_{ij} > 0$ and $C_{ij} = 0$ otherwise. We then compute the average correlations for edges connecting vertices from the same category as well as edges connecting vertices from different categories, e.g., we calculate the sample mean of the $\{\hat{R}_{ij}\}$ when $i$ and $j$ are both motor neurons as well as the sample mean of the $\{\hat{R}_{ij}\}$ when $i$ is a motor neuron and $j$ is a sensory neuron. The results are presented in Table \ref{Correlation_Matrix_Graphon_c.elegans} for all possible pairs of neuron categories; these correlations are all positive and quite large. 
\begin{table}[tp]
    \centering
    \begin{tabular}{|l|c|c|c|}
        \hline
        & motor & inter & sensory\\
        \hline
  motor & 0.144 & 0.111 & 0.153\\
  inter & 0.111 & 0.088 & 0.137\\
sensory & 0.153 & 0.137 & 0.193\\
        \hline
    \end{tabular}
    \caption{Sample means of the estimated correlations
      $\{\hat{R}_{ij}\}$ for different combinations of neuron types
      for $i$ and $j$.}
    \label{Correlation_Matrix_Graphon_c.elegans}
\end{table}
As a sanity check, we also compute the sample Pearson correlation based on the binary entries of $A_c$ and $A_g$ directly, i.e., we compute $\mathrm{Cor}(\{A_c(i,j), A_g(i,j)\})$ where $i$ ranges over all neurons of type $k$ and $j$ ranges over all neurons of type $\ell$ (with $k$ possibly being the same as $\ell$). The results are presented in 
Table~\ref{Correlation_Matrix_SBM_c.elegans}. We see that these sample Pearson correlations exhibit the same general pattern as that for the $\{\hat{R}_{ij}\}$ in Table~\ref{Correlation_Matrix_Graphon_c.elegans}. 
Indeed, the difference between the entries in Table~\ref{Correlation_Matrix_Graphon_c.elegans} and Table~\ref{Correlation_Matrix_SBM_c.elegans} are all less than $0.1$ and can be as small as $0.01$ or $0.03$. 
\begin{table}[tp]
    \centering
    \begin{tabular}{|l|c|c|c|}
        \hline
        & motor & inter & sensory\\
        \hline
  motor & 0.153 & 0.203 & 0.129\\
  inter & 0.203 & 0.145 &0.147\\
sensory & 0.129 & 0.147 & 0.171\\
        \hline
    \end{tabular}
    \caption{Pearson correlations between the edges in $A_c$
      and $A_g$ for different combinations of neuron types.}
    \label{Correlation_Matrix_SBM_c.elegans}
\end{table}

Finally, we discuss the use of the $\{\hat{R}_{ij}\}$ to help improve link predictions for the edges of $A_g$. More specifically, we evaluate the accuracy for link prediction using only the estimated edge probabilities matrix $\hat{P}$ against the accuracy when using $\hat{P}$ in conjunction with $\hat{R}$. For both approaches, we first sub-sample a $A_g^{(\mathrm{sub})}$ from $A_g$ by setting $10\%$ of the entries of $A_g$ to $0$. We next apply USVT to $A_g^{(\mathrm{sub})}$ to obtain an estimate $\hat{P}^{(\mathrm{sub})}$ of $P$. Now let $\mathcal{E}$ be the set of entries in $A_g$ that are set to $0$ in $A_g^{\mathrm{sub}}$. We then threshold the entries $\hat{P}^{(\mathrm{sub})}_{ij}$ for all $(i,j) \in \mathcal{E}$, i.e., for $(i,j) \in \mathcal{E}$ we predict the presence of a link if $\hat{P}^{(\mathrm{sub})}_{ij} > t$ and an absence of a link otherwise. By varying $t \in [0,1]$ we obtain a ROC curve and an
associated AUC for link prediction using only the estimated $\hat{P}^{(\mathrm{sub})}$. A similar approach had also been used in
\cite{yuan_zhang,gao2015rate,rubin2017statistical} when the graphs are assumed to be generated from a latent space or graphon model. Link prediction using both $A_c$ and $A_g$ also follows a similar procedure, but this time we use both $A_g^{(\mathrm{sub})}$ and $A_c^{(\mathrm{sub})}$ to estimate the marginal edge probabilities $\hat{P}^{(\mathrm{sub})}$ for $A_g$ and $\hat{Q}^{(\mathrm{sub})}$ for $A_c$ as well as the estimated correlations $\hat{R}^{(\mathrm{sub})}$; here $A_c^{(\mathrm{sub})}$ is obtained by setting the entries of $A_c$ indexed by $\mathcal{E}$ to $0$ and $\hat{R}^{(\mathrm{sub})}$ is calculated using a similar expression as that in Eq.~\eqref{eq:estimate_R} but with $\hat{P}$ and $\hat{Q}$ replaced by $\hat{P}^{(\mathrm{sub})}$ and $\hat{Q}^{(\mathrm{sub})}$, respectively. Given the $\hat{P}^{(\mathrm{sub})}$ and $\hat{R}^{(\mathrm{sub})}$ we then threshold the entries of $\hat{P}^{(\mathrm{sub})}_{ij} + \hat{R}_{ij}^{(\mathrm{sub})}( A_c(i,j)  - \hat{P}_{ij}^{(\mathrm{sub})})$ for all $(i, j) \in \mathcal{E}$; note that these choices of quantities is motivated from the fact that if $(A, B) \sim R$-$ER(P,Q)$ then $\mathbb{P}[A_{ij} = 1 \mid B_{ij}] = P_{ij} + R_{ij}(B_{ij} - P_{ij})$. By varying the threshold $t \in [0,1]$ we also obtain a ROC curve and associated AUC for link prediction using both $\hat{P}^{(\mathrm{sub})}$ and $\hat{R}^{(\mathrm{sub})}$.

We perform the above AUC calculations $100$ times, each time choosing a random subset of entries $\mathcal{E}$ to set to $0$. ROC curves for a random realization of $\mathcal{E}$ are shown in Figure \ref{C.elegans_roc}. The average AUC when using only $\hat{P}^{(\mathrm{sub})}$ is $0.575$ with a standard error of $0.002$; in contrast, the average AUC when using both $\hat{P}^{(\mathrm{sub})}$ and $\hat{R}^{(\mathrm{sub})}$ is $0.705$
with a standard error of $0.003$. The use of $\hat{R}^{(\mathrm{sub})}$ thus leads to a significant increase in accuracy. 
\begin{figure}[h]
    \centering
    \includegraphics[width=0.8\linewidth]{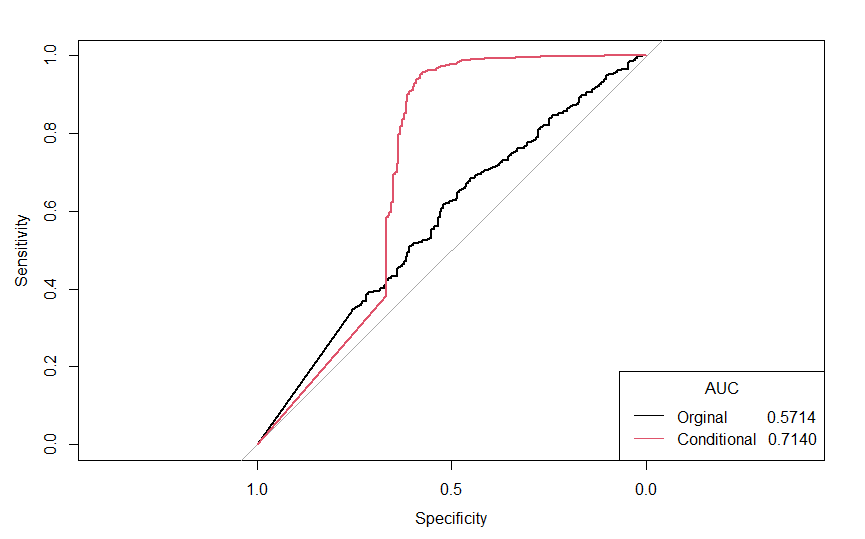}
    \caption{ROC curves for link prediction on a randomly selected set
      of entries $\mathcal{E}$ of the {\em C. elegans} gap junction network $A_g$. The black curve is for
      using $A_g$ only while the red curve is for using both $A_g$ and $A_c$. }
    \label{C.elegans_roc}
\end{figure}

\begin{algorithm}[tp]  
  \caption{Bootstrap procedure for graphons with possibly $P \not = Q$.}  
  \label{Analysis_for_real_data}  
  \begin{algorithmic}
    \Require Adjacency matrices $A$ and $B$, both of size $n \times n$, significance level $\alpha \in (0,1)$, number of bootstrap samples $m$.
     \State (A) Compute the matrix $C$ whose entries are $C_{ij} = 1$ if $A_{ij} + B_{ij} > 0$ and $C_{ij} = 0$ otherwise. 
    \State (B) Compute $\hat{P}, \hat{Q}$ and $\hat{H}$ by applying universal singular value thresholding (USVT) on $A$, $B$, and $C$, respectively. 
    \State (C) Calculate the test statistic $T = \|\hat{H} -  \hat{P} -\hat{Q} + \hat{P} \circ \hat{Q}\|_{F}$.
    \For{$s=1$ to $m$}
       \State (i) Generate adjacency matrices $(A^{(s)}, B^{(s)})$ according to Definition 5 with $R = 0$ and marginal edge probabilities matrices $\hat{P}$ and $\hat{Q}$. 
        \State (ii) Compute $\hat{P}$ and $\hat{P}$ as the universal singular value threshold of $A^{(s)}$ and $B^{(s)}$, respectively. 
       \State (iii) Calculate $T^{(s)} = \|\hat{H}^{(s)}-  \hat{P}^{(s)}-\hat{Q}^{(s)} + \hat{P}^{(s)} \circ \hat{Q}^{(s)}\|_{F}$, where $\hat{H}^{(s)}$ is the universal singular value thresholding of $A^{(s)} + B^{(s)}-A^{(s)} \circ B^{(s)}$.
      \EndFor
      \State (D) Find the smallest number $t$ such that $T>T_t$, where $T_t$ is the $t$-th largest element in $\{T^{(s)}\}_{s=1}^{m}$,
    \State (D) p-value$=(t-0.5)/m$
    
     \State \textbf{Output} p-value. 
  \end{algorithmic}  
\end{algorithm}

\subsection{Wikipedia Data}
We now analyze two networks formed by a collection of Wikipedia articles. The first network, denote as $A_e$, consists of $1382$ vertices and $37714$ edges. Each vertex in $A_e$ represents an article in the English Wikipedia on topics related to Algebraic Geometry, and two given vertices are connected if there is a hyperlink between them in the English Wikipedia. The second network, denote as $A_f$, consists of $1382$ vertices and $29946$ edges corresponding to the same Wikipedia articles as that in $A_e$ but the hyperlinks are now for the French Wikipedia. See \cite{priebe2009fusion} for a more detailed description of these networks. We now follow the same analysis as that described in Section~\ref{Analysis_for_real_data} for the {\em C. elegans data}. In
particular, we first test the null hypothesis that $A_e$ and $A_f$ are independent. As $A_e$ and $A_f$ are both quite sparse (their edge densities are $0.02$ and $0.016$, respectively), we apply the test statistic in Theorem~\ref{theorem 1} to their complements $\bar{A}_e$ and $\bar{A}_f$, i.e., $\bar{A}_e = 11^{\top} - A_e$ and
$\bar{A}_f = 11^{\top} - A_f$ where $11^{\top}$ is the $1382 \times 1382$ matrix of all ones. Note that, by Eq.~\eqref{distribution P_n}, if $(A, B) \sim R$-$ER(P)$ then $(\bar{A}, \bar{B}) \sim R-ER(11^{\top} - P)$ and hence, assuming the model in Section~\ref{graphon Model} is appropriate, inference based on $T(A, B)$ and $T(\bar{A}, \bar{B})$ are theoretically equivalent. This yield an observed test statistic of $T(\bar{A}_e, \bar{A}_f) = 21.514$ and, using the bootstrapping procedure in Algorithm~\ref{Analysis_for_real_data} with $m = 10000$, an approximate $p$-value of $5 \times 10^{-5}$. We thus reject the null hypothesis and are in favor of the alternative hypothesis that the English and French Wikipedia networks are correlated. 

We next quantify the degree of correlations between the edges of $A_e$ and $A_f$. The articles in $A_e$ and $A_f$
can be grouped into six classes, namely (1) people, (2) places, (3) dates, (4) math things (articles about math topics that are neither people, places, nor dates) (5) things (article about non-math topics that are neither people, places nor dates) and (6) categories (a special type of Wikipedia article). We then calculate the sample means of the estimated correlations $\hat{R}$ for edges within the same categories and between different categories (see the description on Table~\ref{Correlation_Matrix_Graphon_c.elegans}
 and Table~\ref{Correlation_Matrix_SBM_c.elegans} in Section~\ref{sec:c_elegans} for more details). The results are presented in Table~\ref{Correlation_Matrix_Graphon_Wiki} and
Table~\ref{Correlation_Matrix_SBM_Wiki}; once again we see that the entries in the two tables are highly similar and they both indicate that the correlations between the edges of $A_e$ and $A_f$ are positive and quite large. 
\begin{table}[h]
    \centering
    \begin{tabular}{|l|c|c|c|c|c|c|}
        \hline
        & people & places & dates & things & math things & categories\\
        \hline
    people & .501 & .419 &.336 &.381 &.375 &.417\\
    places & .419 & .318 &.269 &.296 &.272 &.307\\
     dates & .336 & .269 &.278 &.227 &.148 &.159\\
    things & .381 & .296 &.227 &.267 &.238 &.282\\
math things& .375 & .272 &.148 &.238 &.192 &.231\\
categories & .417 & .307 &.159 &.282 &.231 &.273\\
        \hline
    \end{tabular}
    \caption{Sample means of the estimated correlations
      $\{\hat{R}_{ij}\}$ for different combinations of Wikipedia
      article types for $i$ and $j$.}
    \label{Correlation_Matrix_Graphon_Wiki}
\end{table}

\begin{table}[h]
    \centering
    \begin{tabular}{|l|c|c|c|c|c|c|}
        \hline
        & people & places & dates & things & math things & categories\\
        \hline
    people & .570 & .501  &.460  &.399 &.499 &.751\\
    places &.501 & .489 & .376 & .412 & .295 & .607\\
     dates & .460 & .376 & .589 & .292 & .100 &    NA\\
    things & .399 & .412 & .292 &.348  &.233 & .452\\
math things& .499 & .295 & .100 &.233 & .283 & .507\\
categories & .751 & .607  &  NA & .452 &.507 &.462\\
        \hline
    \end{tabular}
    \caption{Pearson correlations between the edges of $A_e$ and $A_f$
	for different combinations of Wikipedia article types. The value \texttt{NA} for the pair \texttt{dates} and \texttt{categories} is because there are no edges between any vertices in \texttt{date} and any 
	vertices in \texttt{categories} for both graphs.}
    \label{Correlation_Matrix_SBM_Wiki}
\end{table}

Finally, we consider link prediction for the English Wikipedia network $A_e$. We follow the procedure described in Section~\ref{Analysis_for_real_data} wherein we set $10\%$ of
the entries of $A_e$ to $0$ and then compare the AUC for link prediction from $\hat{P}^{(\mathrm{sub})}$ alone against that of $\hat{P}^{(\mathrm{sub})}$ and $\hat{R}^{\mathrm{sub}}$. The sample mean of the AUCs based on $100$ randomly selected $\mathcal{E}$ are $0.863$ (standard error = $0.0006$) for $\hat{P}^{(\mathrm{sub})}$ only
and improve to $0.956$ (standard error = $0.0005$) when we also include $\hat{R}^{(\mathrm{sub})}$. ROC curves for a random realization of $\mathcal{E}$ are shown in Figure \ref{wiki_roc}.

\begin{figure}[h]
    \centering
    \includegraphics[width=0.8\linewidth]{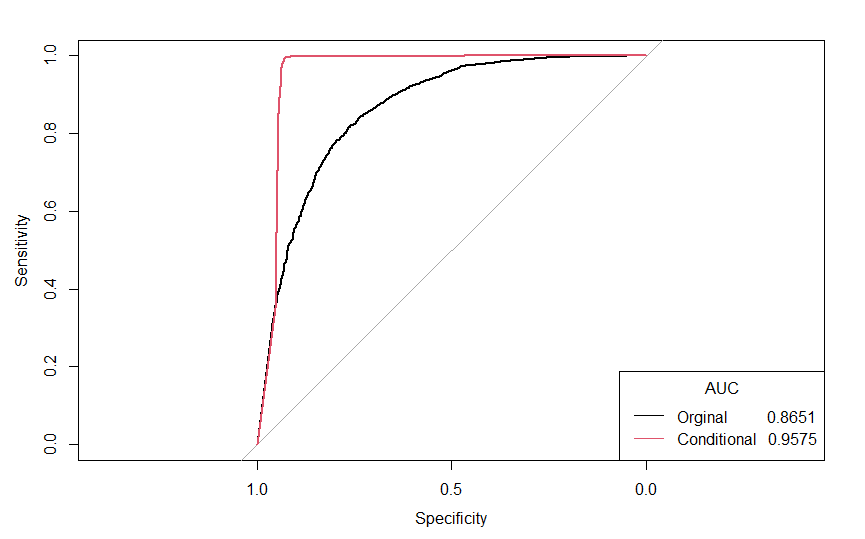}
    \caption{
ROC curves for link prediction on a randomly selected set
      of entries $\mathcal{E}$ of the English Wikpedia network $A_e$. The black curve is for
      using $A_e$ only while the red curve is for using both $A_e$ and $A_f$. 
}
    \label{wiki_roc}
\end{figure}

\section{Conclusion}
\label{sec:conclusion}
In this paper, we formulated independence testing between graphs as, given a pair of inhomogeneous Erd\H{o}s-R\'{e}nyi graphs with edge-correlation $R$, deciding between $\mathbb{H}_0 \colon \|R\|_{F} = 0$ and $\mathbb{H}_A \colon \|R\|_{F} > 0$. We show that there exists an asymptotically valid and consistent test procedure only if $\|R\|_{F}\to\infty$ as
the number of vertices $n$ diverges. When the graphs and their pairwise correlations are generated from a latent position model, we propose an asymptotically valid and consistent test procedure that also runs in time polynomial in $n$. We now mention two directions for future research.

Comparing the theoretical results in Theorem~\ref{theorem 1} and Theorem~\ref{thm:SBM} against either Remark~\ref{rem:not_sufficient} or Theorem~\ref{thm:gap}, we see that while $\|R\|_{F} \rightarrow \infty$ for all of these examples, it is nevertheless easier, both statistically and computationally, to detect $R \not = 0$ when it has some structure. Indeed, for both Theorem~\ref{theorem 1} and Theorem~\ref{thm:SBM} we have $R_{ij} = f(m_{ij})$ where $f$ is a smooth function and the matrix $M = (m_{ij})$ is low-rank. In contrast, the matrix $R$ in Remark~\ref{rem:not_sufficient} and Theorem~\ref{thm:gap} is either completely random or has no low-rank structure.
Therefore, while $R \not = 0$ if and only if $\|R\|_{F} > 0$, the magnitude of $\|R\|_{F}$ itself is not sufficiently refined to distinguish between the simple and more difficult settings for $R \not = 0$. Determining the right measure of the correlation between graphs is thus of both theoretical and practical interest, especially if this measure also leads to thresholds that are both necessary and sufficient for our independence testing problem.

Continuing on the above theme, the critical region for our test statistics in Section~\ref{general} and Section~\ref{different} are based on bootstrapping graphs from the estimated edge probabilities matrices (see e.g., Algorithm~
\ref{Graphon_Model_Testing_Simulation_2}). The validity of these resampling techniques is justified by the empirical simulation studies as well as real data analysis. However, bootstrap sampling of a graph on $n$ vertices generally requires $O(n^2)$ time and $O(n^2)$ memory, which can be prohibitive if $n$ is large. Therefore our test procedures could be more robust and computationally efficient if we are able to derive the limiting distribution of the test statistics in Theorem~\ref{theorem 1} and Theorem~\ref{theorem 4} and thereby obtain approximate critical values. We surmise, however, that this will be a quite technical and challenging problem as it requires substantial refinement of all existing results for USVT as these exclusively focus on upper bounds for the estimation error in Frobenius norm.

Finally, it will also be useful to study other formulations of independence testing for graphs, e.g., by not assuming that they are marginally inhomogeneous Erd\H{o}s-R\'{e}nyi graphs, or by considering more complex correlation structures. A natural and interesting example of this latter type of problem is when we have three or more graphs as their joint distributions cannot be specified using only the marginal distributions and pairwise edges correlations.

\bibliographystyle{apalike}
\bibliography{biblio}

\newpage
\appendix

\section*{Proofs of Stated Results}
\label{sec:proofs_stated_results}
{\bf Proof of Theorem~\ref{thm:main1}.}
Let $\mathcal{S} = \{(0,0),(0,1),(1,0),(1,1)\}$. Let $t =
n(n-1)$. Now denote by
$$\mathcal{S}^{t} = \{(s_1, s_2, \dots, s_t) \colon s_i \in \mathcal{S}\}$$
the set of tuples of
length $t$ whose elements are from $\mathcal{S}$. We can then view any 
realization $(A_{n}, B_n)$ from the $R$-ER$(P)$ model as 
corresponding to some element of $\mathcal{S}^{t}$. Let $A_{ij}$
and $B_{ij}$ denote the $ij$th element of $A_{n}$ and $B_{n}$,
respectively; note that, for ease of exposition, we dropped the
index $n$ from these notations. The second moment for the likelihood
ratio between $\mathcal{P}_n$ and $\mathcal{Q}_n$ is then given by
\begin{equation*}
  \begin{split}
    \mathbb{E}_{\mathcal{Q}_n}\Big[\Big(\frac{\mathcal{P}_n(A_n, B_n)}{\mathcal{Q}_n(A_n,B_n)}\Big)^2\Big]
        &=\sum_{
             (A_{n},B_{n}) \in \mathcal{S}^{t}} 
                 \frac{\mathbb{P}(A_n,B_n)^2}{\mathbb{Q}(A_n,B_n)} \\
&=\sum_{(A_{n},B_{n}) \in \mathcal{S}^{t}} 
          \prod_{i< j}\frac{{\mathbb{P}(A_{ij},B_{ij})^2}}{\mathbb{Q}(A_{ij}, B_{ij})}\\
        &= \prod_{i < j}\Big[\Big(\frac{{\mathbb{P}}_{ij}(1,1)^2}{{\mathbb{Q}}_{ij}(1,1)}+
        \frac{{2\mathbb{P}}_{ij}(1,0)^2}{{\mathbb{Q}}_{ij}(1,0)}+\frac{{\mathbb{P}}_{ij}(0,0)^2}{{\mathbb{Q}}_{ij}(0,0)}\Big) \\
        &=\prod_{i < j}(1+R_{ij}^2)
\end{split}
\end{equation*}
The last equality in the above display is derived as follows (see also Eq.~\eqref{distribution P_n})
    \begin{equation*}
      \begin{split}
        \frac{\mathbb{P}_{ij}(1,1)^2}{\mathbb{Q}_{ij}(1,1)}
            + \frac{2\mathbb{P}_{ij}(1,0)^2}{\mathbb{Q}_{ij}(1,0)}
            + \frac{\mathbb{P}_{ij}(0,0)^2}{\mathbb{Q}_{ij}(0,0)} &= P_{ij}^2
            +2 P_{ij}(1-P_{ij})R_{ij}
            +(1-P_{ij})^2R_{ij}^2\\
        &+2P_{ij}(1-P_{ij})
            - 4 P_{ij}(1-P_{ij})R_{ij}
            \\ &+2P_{ij}(1-P_{ij})R_{ij}^2 \\
        &+(1-P_{ij})^2
            +2P_{ij}(1-P_{ij})R_{ij}
            +P_{ij}^2R_{ij}^2\\
        &=1+R_{ij}^2
    \end{split}
    \end{equation*}
We thus obtain
\begin{equation}
  \label{eq:2nd_moment}
  \begin{split}
    \mathbb{E}_{\mathcal{Q}_n}\Big[\Big(\frac{\mathcal{P}_n(A_n,B_n)}{\mathcal{Q}_n(A_n,B_n)}\Big)^2\Big]
    =\prod_{i < j}(1+R_{ij}^2).
  \end{split}
\end{equation}
Theorem~\ref{thm:main1} follows directly from Eq.~\eqref{eq:2nd_moment}
and the following technical lemma.
\begin{mylem}
  \label{lem:technical}
  $\limsup \prod_{i < j} (1 + R_{ij}^2) < \infty$ iff $\limsup
  \|R\|_F^2 < \infty$. 
\end{mylem}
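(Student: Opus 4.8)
The plan is to sandwich the product $\prod_{i<j}(1+R_{ij}^2)$ between two exponentials of $\|R\|_F^2$, from which the equivalence of the two $\limsup$ conditions is immediate. The crucial starting observation is that, since each $R_{ij}$ is a Pearson correlation, $R_{ij}^2 \in [0,1]$; this boundedness is exactly what makes \emph{both} directions go through. I will also use that the graphs, and hence $R$, are hollow, so that $\sum_{i<j} R_{ij}^2 = \tfrac{1}{2}\|R\|_F^2$.

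For the implication $\limsup\|R\|_F^2 < \infty \Rightarrow \limsup \prod_{i<j}(1+R_{ij}^2) < \infty$, I would apply the elementary bound $1+x \le e^{x}$ with $x = R_{ij}^2 \ge 0$, which gives
\[
\prod_{i<j}(1+R_{ij}^2) \le \exp\Bigl(\sum_{i<j} R_{ij}^2\Bigr) = \exp\bigl(\tfrac{1}{2}\|R\|_F^2\bigr),
\]
so that a bounded $\|R\|_F^2$ forces the product to be bounded.

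For the converse I would pass to logarithms and bound $\log(1+x)$ from below on $[0,1]$. Since $x \mapsto \log(1+x)$ is concave, on $[0,1]$ it lies above the chord joining $(0,0)$ and $(1,\log 2)$, yielding $\log(1+x) \ge (\log 2)\,x$ for all $x \in [0,1]$. Applying this with $x = R_{ij}^2$ gives
\[
\prod_{i<j}(1+R_{ij}^2) = \exp\Bigl(\sum_{i<j}\log(1+R_{ij}^2)\Bigr) \ge \exp\Bigl((\log 2)\sum_{i<j} R_{ij}^2\Bigr) = 2^{\|R\|_F^2/2},
\]
so a bounded product forces $\|R\|_F^2$ to be bounded. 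Combining the two displays produces the sandwich $2^{\|R\|_F^2/2} \le \prod_{i<j}(1+R_{ij}^2) \le e^{\|R\|_F^2/2}$, and the claimed equivalence follows at once.

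The only real subtlety—rather than a genuine obstacle—is the lower bound: an inequality of the form $\log(1+x) \ge cx$ with $c>0$ cannot hold for all $x \ge 0$, since $\log(1+x)/x \to 0$, so it is essential to restrict to $x \in [0,1]$. This is precisely where the constraint $|R_{ij}| \le 1$ on correlations enters, and everything else is routine.
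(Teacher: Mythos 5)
Your proof is correct and follows essentially the same route as the paper's: both directions reduce to sandwiching $\sum_{i<j}\log(1+R_{ij}^2)$ between constant multiples of $\sum_{i<j}R_{ij}^2$, with the boundedness $R_{ij}^2\le 1$ doing the work in the converse direction. Your derivation of the lower bound $\log(1+x)\ge (\log 2)\,x$ on $[0,1]$ via concavity is a cleaner way to reach the same inequality the paper extracts from a power-series manipulation, and your use of $1+x\le e^x$ replaces the paper's Jensen step with no loss.
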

\begin{proof}
First suppose that $\limsup_{n \rightarrow \infty}
\|R\|_F^2\leq C$ for some finite constant $C>0$. Denote $N = n(n-1)/2$. 
Then by Jensen’s inequality we have, for all 
but a finite number of $n$, that 
\begin{equation*}
  \begin{split}
\log \Bigl(\prod_{i < j}(1+R_{ij}^2)\Bigr) &=
    \sum_{i < j}\log(1+R_{ij}^2) \\ &\leq
                                         N\log\Bigl(1+\frac{1}{N}\sum_{i
                                         < j}R_{ij}^2\Bigr)\\
    &\leq \log\Big[\Bigl(1+\frac{1}{2N}\|R\|_F^2\Bigr)^{N}\Big]
    \leq\log\Big[\Bigl(1+\frac{C}{2N}\Bigr)^{N}\Big]
    \leq C/2
\end{split}
\end{equation*}
Conversely, suppose $\limsup_{n \rightarrow \infty}
\sum_{i < j}\log(1+R_{ij}^2)\leq C$ for some finite constant $C>0$.
Then, as $R_{ij}^2 \leq 1$ for all $\{i,j\}$ and $R_{ii} = 0$ for all $i$, we have
\begin{equation*}
  \begin{split}
    \frac{1}{2}\|R\|_F^2
    & \leq \sum_{i < j}\Bigl[e^{\log(1+R_{ij}^2)}-1\Bigr]\\
    &=\sum_{k=1}^\infty\sum_{i < j}\frac{\log^k(1+R_{ij}^2)}{k!}
    \\ & \leq \sum_{k=1}^\infty\sum_{i < j}\frac{\log(1+R_{ij}^2) \times \log^{k-1} 2}{k!} 
    \\ & =\Bigl(\sum_{i < j}\log(1+R_{ij}^2)\Bigr) \sum_{k=1}^\infty\frac{\log^{k-1} 2}{k!}
    \leq \frac{C}{\log 2}.
\end{split}
\end{equation*}
as desired. \end{proof}

	

{\bf Proof of Theorem~\ref{thm:gap}.}
Suppose we are given an adjacency matrix $S$ sampled from a planted
clique model with edges probability $p =
\tfrac{1}{2}$ and {\em unknown} clique size $s_0$. Let us generate a pair of {\em undirected} random graphs $(A, B)$
as follows. The collection $\{(A_{ij}, B_{ij})\}$ for $i < j$ are {\em
  independent} bivariate random variables and furthermore, for any
pair $i < j$,
\begin{gather*}
 \mathbb{P}(A_{ij} = B_{ij} = 1 \mid S_{ij} = 0) =
 \mathbb{P}(A_{ij} = B_{ij} = 0 \mid S_{ij} = 0) = 0.5,\\
 \mathbb{P}(A_{ij} = 1, B_{ij} = 0 \mid S_{ij} = 1) = \mathbb{P}(A_{ij} = 0, B_{ij} =
 1 \mid S_{ij} = 1) = 0.5.
 \end{gather*}
 Let $\xi_{ij} = 1$ if vertices $i$ and $j$ is part of the planted
 clique in $S$ and $\xi_{ij} = 0$ otherwise. Note that we can view the $\xi_{ij}$ as deterministic quantities by
 assuming that the vertices forming the planted
 clique are chosen prior to adding the random edges in $S$. In particular, 
 $S_{ij} = 1$ whenever $\xi_{ij} = 1$ and $S_{ij} \sim
 \mathrm{Bernoulli}(0.5)$ otherwise. We therefore have
 \begin{equation*}
   \begin{split}
   \mathbb{P}(A_{ij} = 1, B_{ij} = 1) &= \mathbb{P}(A_{ij} = B_{ij} = 1
   \mid S_{ij} = 0) \times \mathbb{P}(S_{ij} = 0) \\ &= 0.5 \times
   \bigl(\mathbb{P}(S_{ij} = 0, \xi_{ij} = 0) + \mathbb{P}(S_{ij} = 0,
   \xi_{ij} = 1)\bigr) \\ &= \tfrac{1}{4} \times \bm{1}\{\xi_{ij} = 0\}.
   \end{split}
 \end{equation*}
 Similar reasonings yield
 \begin{gather*}
   \mathbb{P}(A_{ij} = 0, B_{ij} = 0) = \mathbb{P}(A_{ij} = B_{ij} = 0
   \mid S_{ij} = 0) \times \mathbb{P}(S_{ij} = 0) = \tfrac{1}{4} \times
   \bm{1}\{\xi_{ij} = 0\}, \\
   \mathbb{P}(A_{ij} = 0, B_{ij} = 1) = 0.5 \times \mathbb{P}(S_{ij} =
   1) = \tfrac{1}{4} \times \bm{1}\{\xi_{ij} = 0\} + \tfrac{1}{2} \times \bm{1}\{\xi_{ij} = 1\}, \\
   \mathbb{P}(A_{ij} = 1, B_{ij} = 0) = \tfrac{1}{4} \times
   \bm{1}\{\xi_{ij} = 0\} + \tfrac{1}{2} \times \bm{1} \{\xi_{ij} = 1\},
 \end{gather*}
and hence $(A, B)$ is a realization of a $R$-correlated Erd\H{o}s-R\'{e}nyi graph
with edges probability $p = \tfrac{1}{2}$ and $R_{ij} = -\xi_{ij}$ for
all $\{i,j\}$
(see Eq.~\eqref{distribution P_n}). 
        
Now suppose that either $s_0 = 0$ or
$s_0 = n^{1/4}$. Then, given $S$ and the pair $(A, B)$ randomly generated from $S$, we have 
       \begin{align*}
	        	\mathbb{H}_0^{(1)} \colon \|R\|_F = 0	& \Longleftrightarrow	
                                                                       \mathbb{H}_0^{(2)} \colon
                                                                       \text{$S$ has no planted clique}\\
	        	\mathbb{H}_A^{(1)} \colon\|R\|_F = n^{1/4}	&
                                                    \Longleftrightarrow 
                                                                       \mathbb{H}_A^{(2)}
                                                                       \colon
                                                                       \text{$S$
                                                                       has
                                                                       a
                                                                       planted
                                                                       clique
                                                                       of
                                                                       size
                                                                       at
                                                                       least
                                                                       $n^{1/4}$.}
       \end{align*}
        Therefore, for any given instance $S \sim \text{PlantedClique}(n,
        1/2, s_0)$, there exists an instance $(A, B)$ from
        $R$-$\mathrm{ER}(1/2)$ where $R$ is such that (1) the pair $(A,
        B)$ is generated in polynomial time and (2)
        the planted clique problem on $S$ is equivalent to deciding
        between the null and alternative hypothesis for $\|R\|_F$. Thus, assuming
        the Planted Clique conjecture holds, i.e., 
        the PlantedClique problem requires quasi-polynomial time, there is no efficient algorithm
        for deciding between $\|R\|_F = 0$ versus $\|R\|_F > 0$ in
       this setting. 

       {\bf Proof of Theorem \ref{theorem 1}}
Let $h_n = \rho_n h$ and $g_n = \gamma_n g$. Now recall
Eq.~\eqref{eq:graphon_H}. Then $C$ corresponds to the adjacency matrix 
of a latent position graph with latent positions $\{(X_i,
Y_i)\}_{i=1}^{n}$ and link function $f_n \colon \mathbb{R}^{d + d'}
\times \mathbb{R}^{d + d'} \mapsto [0,1]$ given by
\begin{equation}\label{eq:graphon_Ha}
 f_n((x,y),(x',y')) = h_n(x,x') + (1 - g_n(y,y')) h_n(x,x')(1 - h_n(x,x')).
\end{equation}
      Next suppose that $\|R\|_{F} = 0$. Then by Theorem~3 in \cite{xu2017rates}, for all $c>0$ there
             exists a constant $C$ such that with probability at least
             $1 - n^{-c}$
    \begin{eqnarray}
    \label{eq:theorem1_xu1}
    	\| \hat{P} - P \|_F \leq C (n \rho_n)^{\tfrac{s + d}{2s + d}}, \quad \text{and} \quad \| \hat{H} - 2P + P \circ P \|_F
      \leq C (n \rho_n)^{\tfrac{s+d}{2s+d}}.
    \end{eqnarray}
    simultaneously. Similarly, suppose $\|R\| > 0$ holds. Once again by Theorem~3 in
    \citet{xu2017rates}, there exists a constant $C'$ such that
    with probability at least $1 - n^{-c}$,
    \begin{eqnarray}
    \label{eq:theorem1_xu2}
    	\| \hat{P} - P \|_F \leq C (n \rho_n)^{\tfrac{s + d}{2s + d}}, \quad \text{and} \quad \|
      \hat{H} - H \|_F \leq
      C (n \rho_n)^{\tfrac{s+d+d'}{2s + d + d'}}
    \end{eqnarray}
    simultaneously. We note that the upper bound for
    $\|\hat{H} - H\|_{F}$ in Eq.~\eqref{eq:theorem1_xu2} is larger
    than that in Eq.~\eqref{eq:theorem1_xu1} and this is due mainly to
    the fact that if $\|R\|_{F} > 0$ then we will be using the latent positions $Z_i = (X_i, Y_i) \in
    \mathbb{R}^{d+d'}$ together with a link function $f_n$ that is also at least
    $s$ times continuously differentiable.
    We therefore have, with probability at least $1 -
    n^{-c}$, that
	\begin{eqnarray*}
		\| \hat{P}\circ\hat{P} - P\circ P \|_F^2
		=	\sum_{i, j}	(\hat{P}_{ij} + P_{ij})^2(\hat{P}_{ij} - P_{ij})^2
		\leq	4\| \hat{P} - P \|_F^2
		\leq	4C^2 (n\rho_n)^{\tfrac{2s+2d}{2s + d}}.
	\end{eqnarray*}
	and hence, for $\|R\| = 0$ we have
	\begin{eqnarray*}
		\|\hat{H} - 2 \hat{P} + \hat{P}\circ\hat{P}\|_F
		&\leq&	\|\hat{H} - 2P + P \circ P \|_F + 2\|\hat{P} - P\|_F +
      \|\hat{P} \circ \hat{P} - P \circ P\|_{F} \\
		&\leq& 4C(n\rho_n)^{\tfrac{s+d}{2s+d}}
	\end{eqnarray*}
    with probability at least $1 - n^{-c}$. Similarly, for $\|R\| > 0$
    we have
	\begin{equation*}
 \begin{split}
		\|\hat{H} - 2 \hat{P} + \hat{P} \circ \hat{P}\|_F
		&\geq \|H - 2P - P\circ P\|_F - \bigl(\|\hat{H} - H\|_F + 2
               \|\hat{P} - P\|_{F} + \|\hat{P}\circ\hat{P} - P\circ P\|_F\bigr) \\
            &\geq \|H - 2P - P\circ P\|_F  - 4C(n \rho_n)^{\tfrac{s+d+d'}{2s+d+d'}} \\
            &= \|R \circ (P - P \circ P)\|_{F} - 4C(n \rho_n)^{\tfrac{s+d+d'}{2s+d+d'}}
	\end{split}
 \end{equation*}
    Now let $\alpha = \tfrac{s + d + d'}{2s + d + d'}$ and note that $\frac{s+d}{2s+d} \leq \alpha$ for any choice of $s \geq 0, d \geq 0$ and $d' \geq 0$.
    Define $T(A,B)$ as the test statistic
    $$T(A, B) = \frac{\|\hat{H} - 2 \hat{P} + \hat{P} \circ
    \hat{P}\|_{F}}{(n \rho_n)^{\alpha} \log^{1/2}{n}}.$$
    If $\|R\|_{F} = 0$ then $T(A, B) \rightarrow 0$ as $n
  \rightarrow \infty$. Furthermore
  if $\|R \circ (P - P \circ P)\|_{F} = \Omega((n \rho_n)^{\alpha'})$ for any
    $\alpha' > \alpha$ then $T(A, B) \rightarrow \infty$ as $n \rightarrow
    \infty$. Thus rejecting $\mathbb{H}_0$ for large values of $T(A,
    B)$ leads to an asymptotically valid and consistent test. 

    {\bf Proof of Corollary~\ref{cor:lpg}}
            If $f$ and $g$ are infinitely differentiable then, in place of Eq.~\eqref{eq:theorem1_xu1} and Eq.~\eqref{eq:theorem1_xu2}, we have
             $$\|\hat{P} - P\|_{F} = O((n \rho_n)^{1/2} \log^{d/2}(n \rho_n)), \quad \|\hat{H} - 2P + P \circ P\|_{F}$$
             under $\mathbb{H}_0$ and
             $$\|\hat{P} - P\|_{F} = O((n \rho_n)^{1/2} \log^{d/2}(n \rho_n)), \quad \|\hat{H} - H\|_{F}$$
             under $\mathbb{H}_A$. See Theorem~4 in \cite{xu2017rates} for a statement of these bounds. The remaining steps follow the same argument as that presented in the proof of Theorem~\ref{theorem 1}. We omit the details. 

{\bf Proof of Theorem~\ref{thm:SBM}}
Recall that the vertices of $C$ are clustered using a community detection algorithm which guarantees exact recovery (see e.g., \cite{abbe2017community,gao2017achieving,lyzinski2014perfect}). We therefore have $\hat{\tau} = \tau$ asymptotically almost surely. Let us now condition on the event that $\hat{\tau} = \tau$. Then for any $k, \ell \in \{1,2,\dots,K\}$, the collection $\{(A_{ij}, B_{ij}) \colon \tau_i = k, \tau_{j} = \ell\}$ are iid bivariate random vectors with Pearson correlation $\rho_{k \ell}$. The central limit theorem then implies
$$\sqrt{n_{k \ell}} (\hat{\rho}_{k\ell}-{\rho}_{k\ell}) \overset{d}{\to}\mathcal{N}(0,(1 - \rho_{k \ell}^2)^2).$$
Furthermore, as $\hat{\rho}_{k \ell}$ depends only on the edges from vertices in the $k$th block to vertices in the $\ell$th block, the $\{\hat{\rho}_{k \ell}\}$ are {\em mutually} independent. 
Now suppose that the null hypothesis is true. Then $\rho_{k \ell} \equiv 0$ and 
the $\sqrt{n_{k \ell}} \hat{\rho}_{k \ell}$ are iid standard normals. In other words we have
$$\sum_{k \leq \ell} n_{k \ell} \hat{\rho}^{2}_{k \ell} \overset{d}{\rightarrow} \chi^2_{K(K+1)/2}$$
under $\mathbb{H}_0$. Next suppose that the alternative hypothesis is true and that there exists a constant $\mu > 0$ such that 
$$\sum_{k \leq \ell} n_{k \ell} \rho_{k \ell}^2 \rightarrow \mu.$$
Then for any $k, \ell$, the term $\sqrt{n_{k \ell}} \rho_{k \ell}$ is bounded and thus, by Slutsky's theorem, we have
$$\sqrt{n_{k \ell}}(\hat{\rho}_{k \ell} - \rho_{k \ell}) \overset{d}{\rightarrow} \mathcal{N}(0,1)$$
for all $k, \ell$.
We can then follow the same arguments as that in \cite{guenther1964another} and show that the limiting distribution of $\sum_{k \leq \ell} n_{k \ell} \hat{\rho}_{k \ell}^2$ depends on the $\{\rho_{k \ell}\}$ only through the quantity $\sum_{k \leq \ell} n_{k \ell} \rho_{k \ell}^2$. Hence, as $\sum_{k \ell} n_{k \ell} \rho_{k \ell}^2 \rightarrow \mu$ we have by Slutksy's theorem that
%
$$\sum_{k\leq l}n_{kl}\hat{\rho}_{kl}^2 \overset{d}{\to} \chi^2_{K(K+1)/2}(\mu)$$
as desired.

\end{document}